\documentclass{llncs}

\usepackage{geometry}
\geometry{
  a4paper,         
  textwidth=15cm,  
  textheight=24cm, 
  heightrounded,   
  hratio=1:1,      
  vratio=2:3,      
}


\usepackage{etex}
\usepackage{environ}
\usepackage{amsmath,amsfonts,amssymb}
\usepackage{xcolor}
\usepackage{graphicx}
\usepackage[binary-units]{siunitx}
\usepackage{numprint}
\usepackage[ruled,vlined]{algorithm2e}
\usepackage{pgfplots}\pgfplotsset{compat=1.14}
\usepackage{booktabs}
\usepackage{balance}

\usepackage[disable]{todonotes}



\newcommand{\LINCOS}{\ensuremath{\mathsf{LINCOS}}}
\newcommand{\ELSA}{\ensuremath{\mathsf{ELSA}}}

\newcommand{\pluseq}{\mathrel{{+}{=}}}

\newcommand{\andeq}{\mathrel{{\land}{=}}}

\newcommand{\NN}{\mathbb{N}}

\newcommand{\Commit}{\mathsf{Commit}}
\newcommand{\Verify}{{\sf Verify}}

\newcommand{\Stamp}{\mathsf{Stamp}}
\newcommand{\RenewTs}{\mathsf{RenewTs}}
\newcommand{\RenewCom}{\mathsf{RenewCom}}

\newcommand{\TS}{\mathsf{TS}}

\newcommand{\dat}{\mathsf{dat}}
\newcommand{\file}{\mathsf{file}}

\newcommand{\Com}{{\sf Com}}

\newcommand{\Sign}{{\sf Sign}}

\newcommand{\Extract}{{\sf Open}}
\newcommand{\Setup}{{\sf Setup}}
\newcommand{\tst}{\mathrm{ts}}

\newcommand{\Exp}{\mathsf{Exp}}

\newcommand{\Init}{\mathsf{Init}}
\newcommand{\Tree}{\mathsf{Tree}}

\newcommand{\Path}{\mathsf{Path}}

\newcommand{\name}{\mathsf{name}}
\newcommand{\names}{\mathtt{filenames}}
\newcommand{\verify}{\mathsf{verify}}
\newcommand{\store}{\mathsf{store}}
\newcommand{\comIds}{\mathtt{comIndices}}
\newcommand{\comCount}{\mathtt{comCount}}
\newcommand{\itemInfos}{\mathtt{itemInfos}}
\newcommand{\evidence}{\mathtt{evidence}}

\newcommand{\renewLists}{\mathtt{renewLists}}

\newcommand{\Forge}{\mathsf{Forge}}

\newcommand{\Store}{\mathsf{Store}}
\newcommand{\Retrieve}{\mathsf{Retrieve}}
\newcommand{\AddCom}{\mathsf{AddCom}}
\newcommand{\AddComRenew}{\mathsf{AddComRenew}}

\newcommand{\Clock}{\mathsf{Clock}}
\newcommand{\thetime}{\mathtt{time}}

\newcommand{\View}{\mathsf{View}}
\newcommand{\Share}{\mathsf{Share}}
\newcommand{\Reconstruct}{\mathsf{Reconstruct}}
\newcommand{\RenewShares}{\mathsf{RenewShares}}
\newcommand{\Reshare}{\mathsf{Reshare}}
\newcommand{\RenewSharing}{\mathsf{RenewSharing}}
\newcommand{\Break}{\mathsf{Break}}

\newcommand{\CD}{\mathsf{CD}}
\newcommand{\COM}{\mathsf{COM}}
\newcommand{\VCOM}{\mathsf{VC}}
\newcommand{\HT}{\mathsf{HT}}
\newcommand{\SHARE}{\mathsf{SHARE}}
\newcommand{\SIG}{\mathsf{SIG}}
\newcommand{\PKI}{\mathsf{PKI}}

\newcommand{\calA}{\mathcal{A}}
\newcommand{\calB}{\mathcal{B}}
\newcommand{\calM}{\mathcal{M}}

\newcommand{\calE}{\mathcal{E}}
\newcommand{\calS}{\mathcal{S}}
\newcommand{\calT}{\mathcal{T}}

\newcommand{\SH}{\mathsf{SH}}
\newcommand{\ES}{\mathsf{ES}}

\newtheorem{construction}{Construction}

\pagestyle{plain} 


\makeatletter

%
%



\makeatletter
\newcommand{\declaresaveclass}[2][1000]{%
 \expandafter\globtoksblk\csname ncory@class@#2\endcsname{#1}%
 \global\expandafter\toks\csname ncory@class@#2\endcsname{??}%
 \newcounter{ncory@class@count@#2}%
 \@namedef{thencory@class@count@#2}{%
   \number\csname ncory@class@#2\endcsname+\arabic{ncory@class@count@#2}\relax
 }%
}

\NewEnviron{savedenv}[2][]{%
\BODY
}

\newcommand\ncory@compute@v[1]{%
  \numexpr\csname ncory@class@#1\endcsname+\value{ncory@class@count@#1}\relax
}
\newcommand\ncory@compute@r[2]{%
  \numexpr\csname ncory@class@#2\endcsname+\getrefnumber{#1}\relax
}

\newcommand{\printsaved}[2]{%
  \toks0={??}%
  \the\toks\numexpr\getrefnumber{#1}\relax\par
 }

\newcommand{\printallsaved}[1]{%
 \@tempcnta=\z@
 \loop
  \ifnum\@tempcnta<\value{ncory@class@count@#1}
  \advance\@tempcnta\@ne
  \the\toks\numexpr\csname ncory@class@#1\endcsname+\@tempcnta\relax\par
  \repeat
 }
\makeatother

\declaresaveclass[100]{proof}
\declaresaveclass[100]{algorithm}

\begin{document}

\title{{\ELSA}: Efficient Long-Term Secure Storage of Large Datasets (Full Version)\thanks{This work has been co-funded by the \mbox{DFG} as part of project S6 within \mbox{CRC} 1119 \mbox{CROSSING}. A short version of this paper appears in the proceedings of \mbox{ICISC'18}.}}

\author{Matthias Geihs \and Johannes Buchmann}
\institute{TU Darmstadt, Germany}

\maketitle



\begin{abstract}
An increasing amount of information today is generated, exchanged, and stored digitally.
This also includes long-lived and highly sensitive information (e.g., electronic health records, governmental documents) whose integrity and confidentiality must be protected over decades or even centuries.
While there is a vast amount of cryptography-based data protection schemes, only few are designed for long-term protection.
Recently, Braun et al.\ (\mbox{AsiaCCS'17}) proposed the first long-term protection scheme that provides renewable integrity protection and information-theoretic confidentiality protection.
However, computation and storage costs of their scheme increase significantly with the number of stored data items.
As a result, their scheme appears suitable only for protecting databases with a small number of relatively large data items, but unsuitable for databases that hold a large number of relatively small data items (e.g., medical record databases).

In this work, we present a solution for \emph{efficient} long-term integrity and confidentiality protection of large datasets consisting of relatively small data items.
First, we construct a renewable vector commitment scheme that is information-theoretically hiding under selective decommitment.
We then combine this scheme with renewable timestamps and information-theoretically secure secret sharing.
The resulting solution requires only a single timestamp for protecting a dataset while the state of the art requires a number of timestamps linear in the number of data items.
We implemented our solution and measured its performance in a scenario where \numprint{12000} data items are aggregated, stored, protected, and verified over a time span of 100 years.
Our measurements show that our new solution completes this evaluation scenario an order of magnitude faster than the state of the art.
\end{abstract}



\section{Introduction}

\subsection{Motivation and problem statement}
Today, huge amounts of information are generated, exchanged, and stored digitally and these amounts will further grow in the future.
Much of this data contains sensitive information (e.g., electronic health records, governmental documents, enterprise documents) and requires protection of \emph{integrity} and \emph{confidentiality}.
Integrity protection means that illegitimate and accidental changes of data can be discovered. Confidentiality protection means that only authorized parties can access the data.
Depending on the use case, protection may be required for several decades or even centuries.
Databases that require protection are often complex and consist of a large number of relatively small data items that require continuous confidentiality protection and whose integrity must be verifiable independent from the other data items.

Today, integrity of digitally stored information is most commonly ensured using digital signatures (e.g., RSA \cite{Rivest:1978:MOD:359340.359342}) and confidentiality is ensured using encryption (e.g., AES \cite{AES}).
The commonly used schemes are secure under certain computational assumptions.
For example, they require that computing the prime factors of a large integer is infeasible.
However, as computing technology and cryptanalysis advances over time, computational assumptions made today are likely to break at some point in the future (e.g., RSA will become insecure once quantum computers are available \cite{shor1997polynomial}).
Consequently, \emph{computationally secure} cryptographic schemes
have a limited lifetime and are insufficient to provide \emph{long-term security}.

Several approaches have been developed to mitigate long-term security risks.
Bayer et al.\ \cite{Bayer1993} proposed a technique for prolonging the validity of a digital signature by using digital timestamps.
Based on their idea, a variety of long-term integrity protection schemes have been developed.
An overview of existing long-term integrity schemes is given by Vigil et al.\ in \cite{DBLP:journals/compsec/VigilBCWW15}.
%
In contrast to integrity protection, confidentiality protection cannot be prolonged.
There is no protection against an adversary that stores ciphertexts today, waits until the encryption is weakened, and then breaks the encryption and obtains the plaintexts.
Thus, if long-term confidentiality protection is required, then strong confidentiality protection must be applied from the start.
A very strong form of protection can be achieved by using \emph{information theoretically secure} schemes, which are invulnerable to computational attacks.
For example, key exchange can be realized using quantum key distribution \cite{RevModPhys.74.145}, encryption can be realized using one-time pad encryption \cite{BLTJ:BLTJ928},
and data storage can be realized using proactive secret sharing \cite{herzberg1995proactive}.
An overview of information theoretically secure solutions for long-term confidentiality protection is given by Braun et al.\ \cite{Braun2014}.

Recently, Braun et al. proposed {\LINCOS} \cite{Braun:2017:LSS:3052973.3053043}, which is the first long-term secure storage architecture that combines long-term integrity with long-term confidentiality protection.
While their system achieves high protection guarantees, it is only designed for storing and protecting a single large data object, but not databases that consist of a large number of small data items.
One approach to store and protect large databases with {\LINCOS} is to run an instance of {\LINCOS} for each data item in parallel.
However, with this construction the amount of work scales linearly with the number of stored data items.
Especially, if the database consists of a large number of relatively small data items, this introduces a large communication and computation overhead.

\subsection{Contribution}
In this paper we propose an efficient solution to storing and protecting large and complex datasets over long periods of time.
Concretely, we present the long-term secure storage architecture {\ELSA} that uses renewable vector commitments in combination with renewable timestamps and proactive secret sharing to achieve this.

Our first contribution (Section~\ref{sec.extvcom}) is to construct an extractable-binding and statistically hiding vector commitment scheme.
Such a scheme allows for committing to a large number of data items by a single short commitment.
The extractable binding property of the scheme enables renewable integrity protection \cite{buldas2017ltcom} while the statistical hiding property ensures information theoretic confidentiality.
Our construction is based on statistically hiding commitments and hash trees \cite{DBLP:conf/crypto/Merkle89}.
We prove that our construction is extractable binding given that the employed commitment scheme and hash function are extractable binding.
Furthermore, we prove that our construction is statistically hiding under selective opening, which guarantees that by opening the commitments to some of the data items no information about unopened data items is leaked.
The construction of extractable-binding and statistically hiding vector commitments may be of independent interest, for example, in the context of zero knowledge protocols \cite{gennaro2006independent}.

Our second contribution (Section~\ref{sec.elsa}) is the construction of the long-term secure storage architecture \ELSA{}, which uses our new vector commitment scheme construction to achieve efficient protection of large datasets.
While protecting a dataset with {\LINCOS} requires the generation of a commitment and a timestamp for each data item separately, \ELSA{} requires only a single vector commitment and a single timestamp to protect the same dataset.
Hence, the number of timestamps is decreased from linear in the number of data items to constant and this drastically reduces the communication and computation complexity of the solution.
Moreover, as the vector commitment scheme is hiding under selective decommitment, the integrity of stored data items can still be verified individually without revealing information about unopened data items.
{\ELSA} uses a separate service for storing commitments and timestamps, which allows for renewing the timestamp protection without access to the stored confidential data. The decommitments are stored together with the data items at a set of shareholders using proactive secret sharing.
We show that the long-term integrity security of $\ELSA$ can be reduced to the unforgeability security of the employed timestamp schemes and the binding security of the employed commitment schemes within their usage period. Long-term confidentiality security is based on the statistical hiding security of the employed commitment and secret sharing schemes.


Finally, we experimentally demonstrate (Section~\ref{sec.performance}) the performance improvements achieved by \ELSA{} in a scenario where \numprint{12000} data items of size \SI{10}{\kilo\byte} are aggregated, stored, protected, retrieved, and verified during a timespan of \numprint{100} years.
For this, we implemented \ELSA{} and the state of the art long-term secure storage architecture \LINCOS{}.
Our measurements show that {\ELSA} completes the evaluation scenario $17x$ faster than {\LINCOS} and integrity protection consumes $101x$ less memory.
In particular, protection renewal is significantly faster with \ELSA{}.
Renewing the timestamps for approximately \numprint{12000} data items takes \SI{21.89}{\minute} with \LINCOS{} and only \SI{0.34}{\second} with \ELSA{}.
Furthermore, storage of the timestamps and commitments consumes \SI{1.75}{\giga\byte} of storage space with \LINCOS{} and only \SI{17.27}{\mega\byte} with \ELSA{} at the end of the experiment.
These improvements are achieved at slightly higher storage costs for the shareholders.
Each shareholder consumes \SI{559}{\mega\byte} with \LINCOS{} and \SI{748}{\mega\byte} with \ELSA{}.
The storage costs for integrity protection are independent of the size of the data items.
Storage, retrieval, and verification of a data item takes less than a second.
Overall, our evaluation shows that \ELSA{} provides practical performance and is suitable for storing and protecting large and complex databases that consist of relatively small data items over long periods of time (e.g., health record or governmental document databases).

\subsection{Related work}
Our notion of vector commitments is reminiscent of the one proposed by Catalano and Fiore \cite{catalano2013vcom}. However, they do not consider the hiding property and therefore do not analyze hiding under selective opening security. Also, they do not consider extractable binding security.
Hofheinz \cite{Hofheinz2011} studied the notion of selective decommitment and showed that schemes can be constructed that are statistically hiding under selective decommitment.
However, they do not consider constructions of vector commitments where a short commitment is given for a set of messages.
In \cite{bitansky2017hunting}, Bitansky et al.\ propose the construction of a SNARK from extractable collision-resistant hash functions in combination with Merkle trees. While their construction is similar to the extractable-binding vector commitment scheme proposed in Section~\ref{sec.construction.extbind}, our construction relies on a weaker property (i.e., extractable-binding hash functions) and our security analysis provides concrete security estimates.

Weinert et al.\ \cite{Weinert:2017:MMP:3052973.3053025} recently proposed a long-term integrity protection scheme that also uses hash trees to reduce the number of timestamps. However, their scheme does not support confidentiality protection, lacks a formal security analysis, and is less efficient than our construction.
Only few work has been done with respect to combining long-term integrity with long-term confidentiality protection. The first storage architecture providing these two properties and most efficient to date is \LINCOS{} \cite{Braun:2017:LSS:3052973.3053043}.
Recently, another long-term secure storage architecture has been proposed by Geihs et al. \cite{geihs2018propyla} that provides access pattern hiding security in addition to integrity and confidentiality.
On a high level, this is achieved by combining \LINCOS{} with an information theoretically secure ORAM.
While access pattern hiding security is an interesting property in certain scenarios where meta information about the stored data is known, it is achieved at the cost of additional computation and communication and it is out of the scope of this work.



\section{Preliminaries}
\label{sec.prelims}

\subsection{Notation}
For a probabilistic algorithm $\calA$ and input $x$, we write $\calA(x) \to_r y$ to denote that $\calA$ on input $x$ produces $y$ using random coins $r$.
For a vector $V=(v_1,\ldots,v_n)$, $n \in \NN$, and set $I \subseteq [n]$, define $V_I:=(v_i)_{i \in I}$, and for $i \in [n]$, define $V_i:=v_i$.
For a pair of random variables $(A,B)$, we define the statistical distance of $A$ and $B$ as $\Delta(A,B):=\sum_x \left| \Pr_A(x) - \Pr_B(x) \right|$.

\subsection{Cryptographic primitives}
We describe the cryptographic primitives that will be relevant for the understanding of this paper.

\subsubsection{Digital Signature Schemes}
A digital signature scheme $\SIG$ is defined by a tuple $(\calM,\Setup,\Sign,\allowbreak \Verify)$, where $\calM$ is the message space, and $\Setup$, $\Sign$, $\Verify$ are algorithms with the following properties.

\begin{description}
\item[$\Setup \to (sk,pk)$:]
This algorithm generates a secret signing key $sk$ and a public verification key $pk$.

\item[$\Sign(sk,m) \to s$:]
This algorithm gets as input a secret key $sk$ and a message $m \in \calM$. It outputs a signature $s$.

\item[$\Verify(pk,m,s) \to b$:]
This algorithm gets as input a public key $pk$, a message $m$, and a signature $s$.
It outputs $b=1$, if the signature is valid, and $0$, if it is invalid.
\end{description}

\noindent
A signature scheme is $\epsilon$-secure \cite{Goldwasser:1988:DSS:45474.45480} if for any $t$-bounded algorithm $\calA$,
$$\Pr
\begin{bmatrix}
\Verify(pk,m,s)=1 \land m \not\in Q :\\
\Setup \to (sk,pk), \calA^O(pk) \to (m,s)
\end{bmatrix}
\leq \epsilon(t) \text,$$
where $O(m) =\{Q \pluseq m; \Sign(sk,m) \to s; \text{return }s;\}$.

\subsubsection{Timestamp schemes}
A timestamp scheme involves a client and a timestamp service.
The timestamp service initializes itself using algorithm $\Setup$.
The client uses protocol $\Stamp$ to request a timestamp from the timestamp service.
Furthermore, there exists an algorithm $\Verify$ that allows anybody to verify the validity of a message-timestamp-tuple.

Timestamp schemes can be realized in different ways (e.g., based on hash functions or digital signature schemes \cite{Haber1991}).
Here, we only consider timestamp schemes based on digital signatures.
This works as follows. On initialization, the timestamp service chooses a signature scheme $\SIG$ and runs the setup algorithm $\SIG.\Setup \to (sk,pk)$. It publishes the public key $pk$.
A client obtains a timestamp for a message $m$, as follows. First, it sends the message to the timestamp service. Then, the timestamp service reads the current time $t$ and creates a signature on $m$ and $t$ by running $\SIG.\Sign(sk,[m,t]) \to s$. It then sends the signature $s$ and the time $t$ back to the client.
Anybody can verify the validity of a timestamp $(t,s)$ for a message $m$ by checking $\SIG.\Verify(pk,[m,t],s)=1$.
Such a signature-based timestamp scheme is considered secure as long as the signature scheme is secure.

\subsubsection{Commitment schemes}
A (non-internactive) commitment scheme $\COM$ is defined by a tuple $(\calM,\Setup,\Commit,\Verify)$, where $\calM$ is the message space, and $\Setup$, $\Commit$, $\Verify$ are algorithms with the following properties.

\begin{description}
\item[$\Setup \to pk$:]
This algorithm generates a public commitment key $pk$.

\item[$\Commit(pk,m) \to (c,d)$:]
This algorithm gets as input a public key $pk$ and a message $m \in \calM$. It outputs a commitment $c$ and a decommitment $d$.

\item[$\Verify(pk,m,c,d) \to b$:]
This algorithm gets as input a public key $pk$, a message $m$, a commitment $c$, and a decommitment $d$.
It outputs $b=1$, if the decommitment is valid, and $0$, if it is invalid.
\end{description}

A commitment scheme is considered secure if it is hiding and binding.
There exist different flavors of defining binding security. Here, we are interested in extractable binding commitments as this enables renewable and long-term secure commitments \cite{buldas2017ltcom}.
A commitment scheme is $\epsilon$-extractable-binding-secure if for any $t_1$-bounded algorithm $\calA_1$, there exists an $t_\calE$-bounded algorithm $\calE$, such that for any $t_2$-bounded algorithm $\calA_2$,
$$\Pr
\begin{bmatrix}
\Verify(pk,m,c,d)=1 \land m \neq m^* :\\
\Setup \to pk,
\calA_1(pk) \to_r c, \\
\calE(pk,r) \to m^*,
\calA_2(k,r) \to (m,d)
\end{bmatrix}
\leq \epsilon(t_1,t_\calE,t_2) \text.$$

For any public key $k$ and message $m$, define $C_k(m)$ as the random variable that takes the value of $c$ when sampling $\Commit(k,m) \to (c,d)$.
A commitment scheme is $\epsilon$-statistically-hiding if for any $k \in \Setup$, any pair of messages $(m_1,m_2)$, $\Delta(C_k(m_1), C_k(m_2)) \leq \epsilon$.

\subsubsection{Keyed hash functions}
A keyed hash function is a tuple of algorithms $(K,H)$ with the following properties. $K$ is a probabilistic algorithm that generates a key $k$. $H$ is a deterministic algorithm that on input a key $k$ and a message $x \in \{0,1\}^*$ outputs a short fixed length hash $y \in \{0,1\}^l$, for some $l \in \NN$.
We say a keyed hash function $(K,F)$ is $\epsilon$-extractable-binding if for any $t_1$-bounded algorithm $\calA_1$, there exists a $t_\calE$-bounded algorithm $\calE$, such that for any $t_2$-bounded algorithm $\calA_2$,
$$\Pr_{K \to k}
\begin{bmatrix}
H(k,x)=H(k,x^*) \land x \neq x^* : \\
\calA_1(k) \to_r y, \calE(k,r) \to x^*,
\calA_2(k,r) \to x
\end{bmatrix}
\leq \epsilon(t_1,t_\calE,t_2) \text.$$

\subsubsection{Secret sharing schemes}
A secret sharing scheme allows a data owner to share a secret data object among a set of shareholders such that only specified subsets of the shareholders can reconstruct the secret, while all the other subsets of the shareholders have no information about the secret.
In this work, we consider threshold secret sharing schemes \cite{Shamir:1979:SS:359168.359176}, for which there exists a threshold parameter $t$ (chosen by the data owner) such that any set of $t$ shareholders can reconstruct the secret, but any set of less than $t$ shareholders has no information about the secret.
A secret sharing scheme has a protocol $\Setup$ for generating the sharing parameters, a protocol $\Share$ for sharing a data object, and a protocol $\Reconstruct$ for reconstructing a data object from a given set of shares.
In addition to standard secret sharing schemes, proactive secret sharing schemes additionally provide a protocol $\Reshare$ for protection against so called mobile adversaries \cite{herzberg1995proactive}.
The protocol $\Reshare$ is an interactive protocol between the shareholders after which all the stored shares are refreshed so that they no longer can be combined with the old shares for reconstruction. This protects against adversaries who gradually corrupt an increasing number of shareholders over the course of time.

\if01

\subsection{{\LINCOS}: Long-Term Secure Storage System providing Integrity and Confidentiality}\label{sec.lincos}

The long-term secure storage architecture {\LINCOS} proposed by Braun et al.\ \cite{Braun:2017:LSS:3052973.3053043} combines long-term integrity with long-term confidentiality protection.
%
It is organized in two subsystems, a confidentiality system for storing confidential data and an integrity system for protecting integrity (see Figure~\ref{fig.lincos}).

\begin{figure}
\centering
\includegraphics[width=.95\columnwidth]{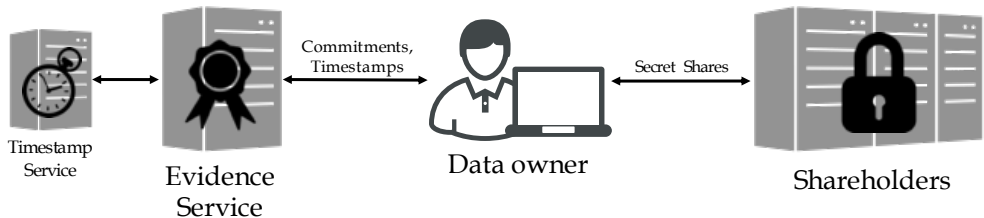}
\caption{\label{fig.lincos} Overview of the components of {\ELSA}.}
\end{figure}

\subsubsection{Confidentiality System.}
The purpose of the confidentiality system is to allow storage of confidential data at third party storage servers.
{\LINCOS} uses information theoretically secure proactive secret sharing \cite{DBLP:conf/crypto/HerzbergJKY95} to achieve information theoretic confidentiality protection.
The confidentiality system comprises of a set of storage servers that communicate with the data owner over information theoretically secure private channels.

\subsubsection{Integrity System.}
The integrity system of {\LINCOS} ensures long-term integrity protection of the stored data.
An integrity proof is generated and maintained in form of an attestation sequence, consisting of timestamps and commitments, that is updated regularly.


The integrity system of {\LINCOS} comprises of an evidence service and a timestamp service.
The evidence service maintains on behalf of the data owner the attestation sequence.
To preserve the confidentiality of the stored data, the data owner does not send the original data to the evidence service.
Instead, it generates information theoretically hiding commitments to the data and sends them.
The evidence service then periodically requests attestations for the commitments and the previous attestations from a timestamp service to ensure long-term protection.
The corresponding decommitment are stored by the data owner at the confidentiality system together with the protected data.
%

To ensure long-term integrity protection, timestamps and commitments have to be renewed periodically.
Timestamps are renewed by time-stamping the data together with the previous timestamps.
Commitments are renewed by committing to the data together with the previous decommitment values.
For more information we refer to the original work \cite{Braun:2017:LSS:3052973.3053043}.

\fi

\section{Statistically hiding and extractable binding vector commitments}
\label{sec.extvcom}

In this section, we define statistically hiding and extractable binding vector commitments, describe a construction, and prove the construction secure.
This construction is the basis for our performance improvements that we achieve with our new storage architecture presented in Section~\ref{sec.elsa}.

\subsection{Definition}

A vector commitment scheme allows to commit to a vector of messages $[m_1,\ldots,m_n]$.
It is extractable binding, if the message vector can be extracted from the commitment and the state of the committer and it is hiding under partial opening if an adversary cannot infer any valuable information about unopened messages, even if some of the committed messages have been opened.
Our vector commitments are reminiscent of the vector commitments introduced by Catalano and Fiore~\cite{catalano2013vcom}. However, neither do they require their commitments to be extractable binding nor do they consider their hiding property.\todo{their scheme additionally considers an update operation which we do not consider}

\begin{definition}[Vector commitment scheme]
A vector commitment scheme is a tuple $(L,\calM,\allowbreak\Setup,\Commit,\Extract,\Verify)$, where $L\in\NN$ is the maximum vector length, $\calM$ is the message space, and $\Setup$, $\Commit$, $\Extract$, and $\Verify$ are algorithms with the following properties.
\begin{description}
\item[$\Setup() \to k$:]
This algorithm generates a public key $k$.

\item[${\Commit(k,[m_1,\ldots,m_n]) \to (c,D)}$:]
On input key $k$ and message vector $[m_1,\ldots,m_n] \in \calM^n$, where $n \in [L]$, this algorithm generates a commitment $c$ and a vector decommitment $D$.

\item[$\Extract(k,D,i) \to d$:]
On input key $k$, vector decommitment $D$, and index $i$, this algorithm outputs a decommitment $d$ for the $i$-th message corresponding to $D$.

\item[$\Verify(k,m,c,d,i) \to b$:]
On input key $k$, message $m$, commitment $c$, decommitment $d$, and an index $i$, this algorithm outputs $b=1$, if $d$ is a valid decommitment from position $i$ of $c$ to $m$, and otherwise outputs $b=0$.
\end{description}
\end{definition}

A vector commitment scheme is correct, if a decommitment produced by $\Commit$ and $\Extract$ will always verify for the corresponding commitment and message.

\begin{definition}[Correctness]
A vector commitment scheme $(L,\calM,\allowbreak\Setup,\Commit,\Extract,\Verify)$ is correct if for all $n \in [L]$, $M \in \calM^n$, $k \in \Setup$, $i \in [n]$,
\begin{equation*}
\Pr\begin{bmatrix}
\Verify(k,M_i,c,d)=1:\\
\Commit(k,M) \to (c,D),
\Extract(k,D,i) \to d 
\end{bmatrix}=1
\text{ .}
\end{equation*}
\end{definition}

A vector commitment scheme is statistically hiding under selective opening, if the distribution of commitments and openings does not depend on the unopened messages.
\begin{definition}[Statistically hiding (under selective opening)]
Let $S=(L,\calM,\Setup,\allowbreak\Commit,\allowbreak\Extract,\Verify)$ be a vector commitment scheme.
For $n\in[L]$, $I \subseteq [n]$, $M\in\calM^n$, $k\in\Setup$, we denote by $\CD_k(M,I)$ the random variable $(c,\bar{D}_I)$, where $(c,D) \gets \Commit(k,M)$ and $\bar{D} \gets (\Extract(D,i))_{i\in [n]}$.
%
%
%
Let $\epsilon \in [0,1]$.
We say $S$ is $\epsilon$-statistically-hiding, if for all $n \in \NN$, $I \subseteq [n]$, $M_1,M_2\in\calM^n$ with $\left(M_1\right)_I=\left(M_2\right)_I$, $k \in \Setup$,
\begin{equation*}
\Delta(\CD_k(M_1,I),\CD_k(M_2,I)) \leq \epsilon
\text{ .}
\end{equation*}
\end{definition}

A vector commitment scheme is extractable binding, if for every efficient committer, there exists an efficient extractor, such that for any efficient decommitter, if the committer gives a commitment that can be opened by a decommitter, then the extractor can already extract the corresponding messages from the committer at the time of the commitment.
\begin{definition}[Extractable binding]
Let $\epsilon:\NN^3 \to [0,1]$.
We say a vector commitment scheme $(L,\calM,\Setup,\Commit,\Extract,\Verify)$ is $\epsilon$-extractable-binding, if for all $t_1$-bounded algorithms $\calA_1$, $t_\calE$-bounded algorithms $\calE$, and $t_2$-bounded algorithms $\calA_2$,
\begin{equation*}
\Pr\begin{bmatrix}
\Verify(p,m,c,d,i)=1
\land
m_i \neq m : \\
\Setup() \to k,
\calA_1(k) \to_r c,\\
\calE(k,r) \to [m_1,m_2,\ldots],
\calA_2(k,r) \to (m,c,d,i)
\end{bmatrix}
\leq
\epsilon(t_1,t_\calE,t_2)
\text{ .}
\end{equation*}
\end{definition}

\subsection{Construction: extractable binding}\label{sec.construction.extbind}
In the following, we show that the Merkle hash tree construction \cite{DBLP:conf/crypto/Merkle89} can be casted into a vector commitment scheme and that this construction is extractable binding if the used hash function is extractable binding.

\begin{construction}\label{const.ebvcom}
Let $(K,H)$ denote a keyed hash function and let $L \in \NN$.
The following is a description of the hash tree scheme by Merkle cast into the definition of vector commitments.

\begin{description}
\item[$\Setup() \to k$:]
Run $K \to k$ and output $k$.

\item[${\Commit(k,[m_1,\ldots,m_n]) \to (c,D)}$:]
Set $l \gets \min\{i \in \NN :n \leq 2^i\}$.
For $i \in \{0,\ldots,n-1\}$, compute $H(k,m_i) \to h_{i,l}$, and for $i \in \{n,\ldots,2^l-1\}$, set $h_{i,l} \gets \bot$.
For $i \in \{l-1,\ldots,0\}$, $j \in \{0,\ldots,2^i-1\}$, compute $H(k,[h_{i-1,2j} , h_{i-1,2j+1}])$.
Compute $H(k,[l,h_{0,0}]) \to c$.
Set $D \gets [h_{i,j}]_{i \in \{0,\ldots,l\}, j \in \{0,\ldots,2^i-1\}}$, and output $(c,D)$.

\item[$\Extract(k,D,i^*) \to d$:]
Let $D \to [h_{i,j}]_{i \in \{0,\ldots,l\}, j \in \{0,\ldots,2^i-1\}}$.
Set $a_l \gets i^*$.
For $j \in \{l,\ldots,1\}$, set $b_j \gets a_j + 2(a_j \bmod 2)-1$, $g_j \gets h_{j,b_j}$, and $a_{j-1} \gets \lfloor a_j/2\rfloor$.
Set $d=[g_1,\ldots,g_l]$ and output $d$.

\item[$\Verify(k,m,c,d,i^*) \to b^*$:]
Let $d=[g_1,\ldots,g_l]$.
Set $a_l \gets i^*$ and compute $H(k,m)\to h_l$.
For $i \in \{l,\ldots,1\}$, if $a_i \bmod 2=0$, set $b_i \gets [h_i,g_i]$, and if $a_i \bmod 2=1$, set $b_i \gets [g_i,h_i]$, and then compute $H(k,b_i) \to h_{i-1}$ and set $a_{i-1} \gets \lfloor a_i/2\rfloor$.
Compute $H(k,[l,h_0]) \to c'$.
Set $b^* \gets (c=c')$. Output $b^*$.
\end{description}
\end{construction}

\begin{theorem}\label{thm.ebvcom.correct}
The vector commitment scheme described in Construction~\ref{const.ebvcom} is correct.
\end{theorem}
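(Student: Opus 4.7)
The plan is to fix arbitrary $n \in [L]$, $M = [m_1,\ldots,m_n] \in \calM^n$, $k \in \Setup$, and $i^* \in [n]$, then trace what the three algorithms compute and show by a level-by-level induction that $\Verify$ re-derives the exact commitment $c$ produced by $\Commit$. As a preparatory step, I would unroll $\Commit(k,M)$ and record that the vector decommitment $D$ contains hashes $h_{i,j}$ for every level $i \in \{0,\ldots,l\}$ and every position $j \in \{0,\ldots,2^i-1\}$, with $h_{l,j^*}=H(k,m_{j^*})$ at the leaves (and $\bot$ in unused slots) and $h_{i,j}=H(k,[h_{i+1,2j},h_{i+1,2j+1}])$ at internal nodes, and finally $c=H(k,[l,h_{0,0}])$. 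This fixes the target the verification must reproduce.

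Next I would analyse $\Extract(k,D,i^*)$. The recursion $a_l \gets i^*$, $a_{j-1} \gets \lfloor a_j/2\rfloor$ just reads off the binary expansion of $i^*$ from leaf to root, so $a_j = \lfloor i^*/2^{l-j}\rfloor$ for every $j$; and $b_j = a_j + 2(a_j \bmod 2) - 1$ is precisely the sibling of $a_j$ in the pair $\{2\lfloor a_j/2\rfloor, 2\lfloor a_j/2\rfloor+1\}$. Therefore $d = [g_1,\ldots,g_l]$ is the sequence of sibling hashes along the path from the $i^*$-th leaf up to the root, exactly as produced by $\Commit$.

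For $\Verify(k,M_{i^*},c,d,i^*)$, I would prove by downward induction on $i \in \{l,\ldots,0\}$ that the intermediate value $h_i$ computed during verification equals $h_{i,a_i}$ from the tree produced by $\Commit$, where the verification's sequence $a_l,\ldots,a_0$ coincides with the extraction's sequence by the same recursion. The base case $i=l$ is immediate since $h_l = H(k,M_{i^*}) = h_{l,i^*}$. For the inductive step, the parity of $a_i$ tells which child of its parent node $a_i$ is: if $a_i$ is even then $a_i$ is the left child and its sibling hash is $g_i = h_{i,a_i+1}$, so $b_i = [h_i,g_i] = [h_{i,a_i},h_{i,a_i+1}]$; if $a_i$ is odd, symmetrically $b_i = [h_{i,a_i-1},h_{i,a_i}]$. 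Either way $b_i$ equals the exact pair that $\Commit$ hashed to produce $h_{i-1,\lfloor a_i/2\rfloor} = h_{i-1,a_{i-1}}$, so the induction carries through.

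Applying the induction at level $0$ gives $h_0 = h_{0,0}$, hence $c' = H(k,[l,h_0]) = H(k,[l,h_{0,0}]) = c$ and $\Verify$ returns $1$. The only real obstacle is purely bookkeeping: being careful with the two different index conventions (the tree indexing $(i,j)$ in $\Commit$ versus the path indices $a_i$ in $\Extract/\Verify$) and with the parity-driven ordering of children, so that the sibling chosen in $\Extract$ is always concatenated on the correct side in $\Verify$. Once that correspondence is pinned down, no probabilistic or cryptographic argument is needed since all algorithms are deterministic and the stated probability is $1$.
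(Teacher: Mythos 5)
Your proof is correct and follows essentially the same route as the paper's: identify the decommitment as the sequence of sibling hashes along the leaf-to-root path and observe that $\Verify$ recomputes that path, forcing $h_0=h_{0,0}$ and hence $c'=c$. You are in fact slightly more careful than the paper, which writes $h_{0,0}=c$ where the commitment is actually $c=H(k,[l,h_{0,0}])$; your explicit downward induction and final step $c'=H(k,[l,h_0])=c$ close that small gap.
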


\begin{savedenv}{proof}
\begin{proof}[Proof of Theorem~\ref{thm.ebvcom.correct}]
Let $(L,\calM,\Setup,\Commit,\Extract,\Verify)$ be the scheme described in Construction~\ref{const.ebvcom}.
Furthermore, let $n \in [L]$, $M \in \calM^n$, $k \in \Setup$, $i \in [n]$, $\Commit(k,M)\to(c,D)$, $\Extract(k,D,i)\to d$, and $\Verify(k,M_i,\allowbreak c,d) \to b$.
By the definition of algorithms $\Commit$ and $\Extract$, we observe that $d=[g_1,\ldots,g_l]$ are the siblings of the nodes in the hash tree $D=[h_{i,j}]_{i \in \{0,\ldots,l\}, j \in \{0,\ldots,2^i-1\}}$ on the path from leaf $H(k,M_i)$ to the root $h_{0,0}$.
We observe that algorithm $\Verify$ recomputes this path starting with $H(k,M_i)$.
Thus,  $h_0=h_{0,0}=c$. It follows that $b=1$.
\qed
\end{proof}
\end{savedenv}

\begin{theorem}\label{thm.ebvcom.eb}
Let $(K,H)$ be an $\epsilon$-extractable-binding hash function. The vector commitment scheme described in Construction~\ref{const.ebvcom} instantiated with $(K,H)$ is $\epsilon'$-extractable-binding with $\epsilon'(t_1,t_\calE,t_2)=2L*\epsilon(t_1 + t_\calE/L,t_\calE/L,t_2)$.
\end{theorem}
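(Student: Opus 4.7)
The plan is to reduce extractable binding of the Merkle vector commitment to extractable binding of the underlying hash function via a tree-extraction argument, with a union bound over the nodes of the tree. The vector commitment extractor $\calE$ will peel the Merkle tree layer by layer using $\calE_H$, and a successful binding break by $\calA_2$ will be shown to pinpoint a second-preimage collision at one of the (at most $2L$) hash evaluations inside the tree.

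First I would define $\calE$ as follows. Given $(k,r)$, it simulates $\calA_1(k)$ with coins $r$ to recover the root commitment $c$, then applies $\calE_H$ to $\calA_1$ to extract the claimed preimage $[l^*,h_{0,0}^*]$. For every internal node with extracted label $h_{i,j}^*$, one defines a composite adversary $\calA_{i,j}$ that reruns $\calA_1$ on coins $r$ and repeats the partial extraction down to node $(i,j)$, outputting $h_{i,j}^*$ as its hash value; invoking $\calE_H$ on $\calA_{i,j}$ yields a claimed preimage pair $[h_{i+1,2j}^*,h_{i+1,2j+1}^*]$. Iterating to level $l$ produces the leaf hashes $h_{l,i}^*$, and one more layer of $\calE_H$ calls on the composite adversaries outputting these leaf hashes recovers the messages $m_i^*$. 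The full vector $[m_0^*,\dots,m_{n-1}^*]$ is what $\calE$ outputs. Since the tree has at most $2L$ hash invocations, $\calE$ issues at most $2L$ calls to $\calE_H$; budgeting each call by $t_\calE/L$ and charging each composite adversary with $t_1$ plus the already-budgeted path extraction gives the stated parameters $(t_1+t_\calE/L,\,t_\calE/L)$.

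For the security step, suppose $\calA_2$ outputs $(m,c,d,i^*)$ with $\Verify(k,m,c,d,i^*)=1$ and $m\neq m_{i^*}^*$. The verifier reconstructs a sequence of hash values $\tilde h_l,\tilde h_{l-1},\dots,\tilde h_0$ along the path from the candidate leaf up to the root, culminating in $c$. Compare these to the values $h_{l,i^*}^*,h_{l-1,\lfloor i^*/2\rfloor}^*,\dots,h_{0,0}^*$ produced by $\calE$ along the same path. At the root, both paths equal $c$, and both yield $[l^*,h_{0,0}^*]$ as the root's preimage (else $\calA_2$ directly violates hash binding at the root). At the leaf level, $\tilde h_l=H(k,m)$ while the extractor's pre-image for $h_{l,i^*}^*$ is $m_{i^*}^*\neq m$. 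Hence there is some highest level $v$ along the path at which $\tilde h_v=h_{v,\cdot}^*$ but the inputs used to produce them differ: $\calA_2$'s reconstructed child-pair (respectively, the leaf message $m$) and $\calE$'s extracted preimage form a second-preimage collision against $H(k,\cdot)$ at node $v$, which is precisely the bad event in the hash-binding game played against the composite adversary $\calA_v$.

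Finally, union-bounding over the at most $2L$ nodes in the tree — any of which could be the location of the collision pinpointed in the previous step — yields
\[
\Pr[\text{vector binding break}] \;\le\; 2L\cdot\epsilon(t_1+t_\calE/L,\,t_\calE/L,\,t_2),
\]
as claimed. I expect the main obstacle to be the bookkeeping: one has to define the composite adversaries $\calA_{i,j}$ cleanly so that their outputs and coins align with the hash-binding experiment, verify that the composite adversary against which the collision is exhibited is the same one whose preimage the vector-commitment extractor $\calE$ has already pinned down, and account carefully for running times while recursion adds layers of extractor calls. The pigeonhole step that identifies the "first" level of disagreement — with the parent matching and the child differing — is what lets the reduction extract a genuine second-preimage witness rather than merely a hash agreement.
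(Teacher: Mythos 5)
Your proposal is correct and follows essentially the same route as the paper's proof: recursively defined composite adversaries $\calA_{i,j}$ outputting the extracted node labels, a hash-extractor call at each of the at most $2L$ tree nodes, and a union bound, with the same time accounting $(t_1+t_\calE/L,\,t_\calE/L)$. Your explicit ``highest level of disagreement'' pigeonhole step is exactly the argument the paper leaves implicit when it asserts that a binding break forces some node extractor to fail.
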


\begin{savedenv}{proof}
\begin{proof}[Proof of Theorem~\ref{thm.ebvcom.eb}]
Let $(K,H)$ be an $\epsilon$-extractable-binding keyed hash function and let $(L,\calM,\Setup,\Commit,\allowbreak\Extract,\Verify)$ be the vector commitment scheme described in Construction~\ref{const.ebvcom} instantiated with $(K,H)$.
To prove the theorem, we use the extractable-binding property of $(K,H)$ to construct an extractor for the vector commitment scheme.

Fix $t_1,t_\calE,t_2 \in \NN$ and $t_1$-bounded algorithm $\calA_1$.
%
We observe that, because $(K,H)$ is $\epsilon$-extractable-binding, for any $t$-bounded algorithm $\calA$, there exists a $t_\calE$-bounded algorithm $\calE^H_\calA$ such that for any $t_2$-bounded algorithm $\calA_2$, for experiment $\calA(k)\to_r h$, $\calA_2(k,r) \to m$, and $\calE^H_\calA(k,r) \to m'$, we have $H(k,m)=h$ and $m\neq m'$ with probability at most $\epsilon(t,t_\calE,t_2)$.

Define $\calA_{0,0}$ as an algorithm that on input $k$, samples $r$, computes $\calE^H_{\calA_1}(k,r) \to [l,h_{0,0}]$, and outputs $h_{0,0}$.
Recursively, define $\calA_{i,j}$ as an algorithm that on input $k$, samples $r$, computes $\calE^H_{\calA_{i-1,\lfloor{j/2}\rfloor}}(k,r) \to [h_0,h_1]$, and outputs $h_{j \bmod 2}$.

Now define the vector extraction algorithm $\calE$ as follows.
On input $(k,r)$, first compute $\calE^H_{\calA_1}(k,r)\to_{r'}[l,h_{0,0}]$.
Then, for $i \in \{0,\ldots,2^l-1\}$, sample $r_i$ and compute $\calE^H_{\calA_{l,i}}(k,[r,r',r_i])\to m_{i+1}$.
Output $[m_1,\ldots,m_{2^l}]$.

We observe that for any $t_2$-bounded algorithm $\calA_2$, for experiment $\calA_1(k)\to_r c$, $\calA_2(k,r) \to (m,d,i)$, and $\calE(k,r) \to M$, the probability of having $\Verify(k,m,c,d,i)=1$ and $M_i \neq m$ is upper-bounded by the probability that at least one of the node extraction algorithms relied on by $\calE$ fails.
As there are at most $2L$ nodes in the tree, this probability is upper-bounded by $2L*\epsilon(\max\{t_1, t_\calE\},\allowbreak t_\calE,\allowbreak t_2)$.

It follows that Construction~\ref{const.ebvcom} is $\epsilon'$-extractable-binding with $\epsilon'(t_1,t_\calE,t_2) = 2L*\epsilon(t_1 + t_\calE/L,t_\calE/L,t_2)$.
\end{proof}
\end{savedenv}

\subsection{Construction: extractable binding and statistically hiding}\label{sec.construction}
We now combine a statistically hiding and extractable binding commitment scheme with the vector commitment scheme from Construction~\ref{const.ebvcom} to obtain a statistically hiding (under selective opening) and extractable binding vector commitment scheme.
The idea is to first commit with the statistically hiding scheme to each message separately and then produce a vector commitment to these individually generated commitments.

\begin{construction}\label{const.ebshvcom}
Let $\COM$ be a commitment scheme and $\VCOM$ be a vector commitment scheme.
\begin{description}
\item[$\Setup() \to k$:]
Run $\COM.\Setup() \to k_1$, $\VCOM.\Setup() \to k_2$, set $k \gets (k_1,k_2)$, and output $k$.

\item[${\Commit(k,[m_1,\ldots,m_n]) \to (c,D)}$:]
Let $k \to (k_1,k_2)$.
For $i \in \{1,\ldots,n\}$, compute $\COM.\Commit(k_1,\allowbreak m_i) \to (c_i,d_i)$.
Then compute $\VCOM.\Commit(k_2,[c_1,\ldots,c_n]) \to (c,D')$, set $D \gets ([(c_1,d_1),\allowbreak\ldots,\allowbreak(c_n,d_n)],D')$, and output $(c,D)$.

\item[$\Extract(k,D,i) \to d$:]
Let $k \to (k_1,k_2)$ and $D \to ([(c_1,d_1),\allowbreak\ldots,\allowbreak(c_n,d_n)],D')$.
Compute $\VCOM.\Extract(k_2,\allowbreak D',i) \to d'$, set $d \gets (c_i,d_i,d')$, and output $d$.

\item[$\Verify(k,m,c,d,i) \to b$:]
Let $k \to (k_1,k_2)$ and $d \to (c',d',d'')$.
Compute $\COM.\Verify(k_1,\allowbreak m,c',d') \to b_1$ and then compute $\VCOM.\Verify(k_2,c',c,d'',i) \to b_2$, set $b \gets (b_1 \land b_2)$, and output $b$.
\end{description}
\end{construction}

\begin{theorem}\label{thm.ebshvcom.correct}
The vector commitment scheme described in Construction~\ref{const.ebshvcom} is correct if $\COM$ and $\VCOM$ are correct.
\end{theorem}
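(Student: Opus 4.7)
The plan is to unfold the definitions of $\Commit$, $\Extract$, and $\Verify$ in Construction~\ref{const.ebshvcom} and reduce the correctness statement to the correctness of the two underlying schemes $\COM$ and $\VCOM$. The proof is essentially bookkeeping, so I would keep it short and structural rather than computational.

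First, I would fix $n \in [L]$, $M = [m_1,\ldots,m_n] \in \calM^n$, $k = (k_1,k_2) \in \Setup$, and $i \in [n]$, and trace what $\Commit$ produces: per-message commitments $(c_j,d_j) \gets \COM.\Commit(k_1,m_j)$ for $j \in [n]$, together with a vector commitment $(c,D') \gets \VCOM.\Commit(k_2,[c_1,\ldots,c_n])$, assembled into $D = ([(c_1,d_1),\ldots,(c_n,d_n)],D')$. Then I would trace $\Extract(k,D,i)$, which computes $d' \gets \VCOM.\Extract(k_2,D',i)$ and outputs $d = (c_i,d_i,d')$.

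Next I would evaluate $\Verify(k,m_i,c,d,i)$ by reading off the two sub-verifications. The first call is $\COM.\Verify(k_1,m_i,c_i,d_i) \to b_1$; since $(c_i,d_i)$ was produced by $\COM.\Commit(k_1,m_i)$, correctness of $\COM$ gives $b_1 = 1$. The second call is $\VCOM.\Verify(k_2,c_i,c,d',i) \to b_2$; since $(c,D')$ was produced by $\VCOM.\Commit(k_2,[c_1,\ldots,c_n])$ and $d'$ was produced by $\VCOM.\Extract(k_2,D',i)$, correctness of $\VCOM$ applied to the message vector of inner commitments gives $b_2 = 1$. Therefore $b = b_1 \land b_2 = 1$, which is what we need.

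There is no real obstacle here; the only subtle point is ensuring that the message that gets fed into $\VCOM$'s correctness guarantee is indeed $c_i$, i.e.\ that the $i$-th entry of the vector committed to by $\VCOM$ matches the first component of the decommitment triple returned by $\Extract$. This is immediate from how $D$ is packaged in $\Commit$ and unpackaged in $\Extract$, so I would just state it explicitly and close the argument with $\qed$.
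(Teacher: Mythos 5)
Your proposal is correct and follows essentially the same route as the paper's proof: unfold $\Commit$ and $\Extract$, then invoke the correctness of $\COM$ for the inner verification and the correctness of $\VCOM$ for the outer one. Your explicit remark that the $i$-th committed entry matches the first component of the extracted decommitment is a point the paper leaves implicit, but the argument is the same.
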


\begin{savedenv}{proof}
\begin{proof}[Proof of Theorem~\ref{thm.ebshvcom.correct}]
Let $(L,\calM,\Setup,\Commit,\Extract,\Verify)$ be the scheme described in Construction~\ref{const.ebshvcom}.
Let $n \in [L]$, $M \in \calM^n$, $k \in \Setup$, $i \in [n]$, $\Commit(k,M)\to(c,D)$, $\Extract(k,D,i)\to d$, and $\Verify(k,M_i,\allowbreak c,d) \to b$.
We observe that $D=((c_i,d_i)_{i\in[n]},D')$ and furthermore $d \in \VCOM.\Extract(k_2,\allowbreak D',i)$.
By the correctness of $\COM$, it follows that $\COM.\Verify(k_1,M_i,c_i,d_i)=1$, and by the correctness of $\VCOM$, it follows that $\HT.\VCOM(k_2,c_i,c,d,i)=1$.
Thus, $\Verify(k,M_i,c,d,i)=1$.
\end{proof}
\end{savedenv}

\begin{theorem}\label{thm.ebshvcom.hiding}
The vector commitment scheme described in Construction~\ref{const.ebshvcom} is $L\epsilon$-statistically-hiding (under selective opening) if the commitment scheme $\COM$ is $\epsilon$-statistically-hiding.
\end{theorem}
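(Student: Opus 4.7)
My plan is to prove Theorem~\ref{thm.ebshvcom.hiding} via a standard hybrid argument over the unopened coordinates, where each step is handled by the statistical hiding of $\COM$. Fix $n \in [L]$, $k = (k_1, k_2) \in \Setup$, a set $I \subseteq [n]$, and $M_1, M_2 \in \calM^n$ with $(M_1)_I = (M_2)_I$.

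First I would unpack $\CD_k(M, I)$ using Construction~\ref{const.ebshvcom}. Tracing the definitions of $\Commit$ and $\Extract$, the output $(c, \bar{D}_I)$ is a deterministic function of the per-message commitments $c_1, \ldots, c_n$, the decommitments $(d_i)_{i \in I}$ at opened positions, and the internal randomness $R_V$ of $\VCOM.\Commit$: the vector commitment $c$ and each authentication path $\VCOM.\Extract(k_2, D', i)$ for $i \in I$ are determined by $[c_1, \ldots, c_n]$ and $R_V$, while each opened entry contributes its $(c_i, d_i)$ directly. The crucial observations are that (i) the decommitments $d_j$ for unopened coordinates $j \notin I$ do not appear anywhere in the output, and (ii) the random variables $(c_1, d_1), \ldots, (c_n, d_n)$, $R_V$ are mutually independent.

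Next, I would enumerate $[n] \setminus I = \{j_1, \ldots, j_s\}$ with $s \leq n \leq L$ and define hybrids $H_0, \ldots, H_s$, where $H_t = \CD_k(M^{(t)}, I)$ and $M^{(t)}$ equals $M_2$ on $\{j_1, \ldots, j_t\} \cup I$ and $M_1$ elsewhere, so that $H_0 = \CD_k(M_1, I)$ and $H_s = \CD_k(M_2, I)$. Passing from $H_t$ to $H_{t+1}$ only alters the distribution of $(c_{j_{t+1}}, d_{j_{t+1}})$, and since $d_{j_{t+1}}$ is projected out, only the marginal of $c_{j_{t+1}}$ is relevant. The $\epsilon$-statistical-hiding of $\COM$ gives $\Delta(C_{k_1}((M_1)_{j_{t+1}}), C_{k_1}((M_2)_{j_{t+1}})) \leq \epsilon$, so the data-processing inequality (applied to the deterministic function together with the independent remaining inputs) yields $\Delta(H_t, H_{t+1}) \leq \epsilon$. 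Summing via the triangle inequality over $s \leq L$ hybrids gives $\Delta(\CD_k(M_1, I), \CD_k(M_2, I)) \leq L\epsilon$.

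The main obstacle I anticipate is making the unpacking step airtight: I need to verify carefully that the decommitments $d_j$ for $j \notin I$ genuinely drop out of the view — neither directly, nor through the Merkle authentication paths $\VCOM.\Extract(k_2, D', i)$, which are constructed from $[c_1, \ldots, c_n]$ alone and do not see the $d_j$'s — so that the only channel through which $M_{j_{t+1}}$ influences the output is the single marginal $c_{j_{t+1}}$. Once this independence structure is pinned down, the hybrid argument and data-processing inequality close the proof without further work.
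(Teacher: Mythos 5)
Your proposal is correct and follows essentially the same route as the paper's proof: both arguments rest on the observations that the view $(c,\bar{D}_I)$ is a deterministic function of the per-message commitments, the opened decommitments, and the $\VCOM$ randomness, that the individual $\COM.\Commit$ calls are independent, and that the unopened $d_j$ drop out so only the marginals of the $c_j$, $j\notin I$, carry any dependence on the unopened messages. Your explicit hybrid chain with the data-processing inequality is just a cleaner packaging of the paper's pointwise bound $\left|\prod_i \Pr_{M_1}(c_i)-\prod_i \Pr_{M_2}(c_i)\right|\leq\sum_i\left|\Pr_{M_1}(c_i)-\Pr_{M_2}(c_i)\right|$, which is the same telescoping step.
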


\begin{savedenv}{proof}
\begin{proof}[Proof of Theorem~\ref{thm.ebshvcom.hiding}]
Let $(L,\calM,\Setup,\Commit,\Extract,\Verify)$ be the scheme described in Construction~\ref{const.ebshvcom}.
For any $n \in [L]$, $I \subseteq [n]$, $M\in\calM^n$, $k\in\Setup$, and event $A \subseteq \Omega(\CD_k(M,I))$, denote by $\Pr_M(A)$ the probability that $A$ is observed when sampling from $\CD_k(M,I)$.

Let $n \in [L]$, $I \subseteq [n]$, $M_1,M_2\in\calM^n$ with $\left(M_1\right)_I=\left(M_2\right)_I$, $k \in \Setup$.
By the definition of the statistical distance, we have
\begin{equation*}
\Delta(\CD_k(M_1,I),\CD_k(M_2,I))
=
\sum_{c,\bar{D}_I} \left| \Pr_{M_1}(c,\bar{D}_I) - \Pr_{M_2}(c,\bar{D}_I)  \right|
\text{ .}
\end{equation*}
We observe that for any $M=(m_1,\ldots,m_n)\in\calM^n$, algorithm $\Commit(k,M)\to(c,D)$ first computes $\Com(k_1,m_i)\to(c_i,d_i)$, for each $i\in[n]$, and then computes $\HT.\Tree(k_2,[c_1,\ldots,c_n]) \to T$.
Denote $\tilde{C}=(c_i)_{i\in[n]}$ and $\tilde{D}=(d_i)_{i\in[n]}$.
By the law of total probability we have
\begin{equation*}
\Pr_{M}(c,\bar{D}_I)
=
\sum_{\stackrel{\tilde{C},\tilde{D}_I:}{c,\bar{D}_I}} \Pr_{M}(c,\bar{D}_I | \tilde{C},\tilde{D}_I) * \Pr_{M}(\tilde{C},\tilde{D}_I)
\text{ .}
\end{equation*}
Furthermore, we observe that $c$ and $T$ are determined by $(k,\tilde{C})$.
We also observe that on input $k$, $D=(T,(c_i,d_i)_{i\in[n]})$, and $i \in [n]$, algorithm $\Extract$ computes $\HT.\Path(k_2,T,i) \to P$ and outputs $d=(c_i,d_i,P)$. Hence, the output $d$ is determined by $(k,T,i,c_i,d_i)$.
Hence, $\Pr_{M}(c,\bar{D}_I | \tilde{C},\tilde{D}_I)>0$ implies $\Pr_{M}(c,\bar{D}_I | \tilde{C},\tilde{D}_I)=1$.
It follows that
\begin{equation*}
\sum_{\stackrel{\tilde{C},\tilde{D}_I:}{c,\bar{D}_I}} \Pr_{M}(c,\bar{D}_I | \tilde{C},\tilde{D}_I) * \Pr_{M}(\tilde{C},\tilde{D}_I)
=
\sum_{\stackrel{\tilde{C},\tilde{D}_I:}{c,\bar{D}_I}} \Pr_{M}(\tilde{C},\tilde{D}_I)
\text{ .}
\end{equation*}
Next, we observe from the description of algorithm $\Commit$ that each call $\Com(M_i)\to(c_i,d_i)$ is independent of the other calls $\Com(M_j)\to(c_j,d_j)$, $j \neq i$.
Thus,
\begin{equation*}
\Pr_{M}(\tilde{C}, \tilde{D}_I)
=
\Pr_{M}(\tilde{C}_I,\tilde{D}_I) * \prod_{i \in [n]\setminus I} \Pr_{M}(c_i)
\text{ .}
\end{equation*}
By the description of algorithms $\Commit$ and $\Extract$, we have that $(\tilde{C}_I,\tilde{D}_I)$ is determined by $M_I$.
We observe that $(M_1)_I=(M_2)_I$. Thus,
\begin{equation*}
\Pr_{M_1}(\tilde{C}_I,\tilde{D}_I)
=
\Pr_{M_2}(\tilde{C}_I,\tilde{D}_I)
\text{ .}
\end{equation*}
It follows that
\begin{equation*}
\begin{split}
&\left|
\Pr_{M_1}(\tilde{C}_I,\tilde{D}_I)
*
\left( \prod_{i \in [n]\setminus I} \Pr_{M_1}(c_i) \right)
-
\Pr_{M_2}(\tilde{C}_I,\tilde{D}_I)
*
\left( \prod_{i \in [n]\setminus I} \Pr_{M_2}(c_i) \right)
\right|
\\
&=
\Pr_{M_1}(\tilde{C}_I,\tilde{D}_I)
*
\left|
\left( \prod_{i \in [n]\setminus I} \Pr_{M_1}(c_i) \right)
-
\left( \prod_{i \in [n]\setminus I} \Pr_{M_2}(c_i) \right)
\right|
\\
&\leq
\Pr_{M_1}(\tilde{C}_I,\tilde{D}_I)
*
\sum_{i \in [n]\setminus I} \left| \Pr_{M_1}(c_i) - \Pr_{M_2}(c_i) \right|
\text{ .}
\end{split}
\end{equation*}
We observe that because $\Com$ is $\epsilon$-statistically-hiding, we have that
$\left| \Pr_{M_1}(c_i) - \Pr_{M_2}(c_i) \right| \leq \epsilon$.
It follows that
\begin{equation*}
\sum_{i \in [n]\setminus I} \left| \Pr_{M_1}(c_i) - \Pr_{M_2}(c_i) \right|
\leq
L\epsilon \text{ .}
\end{equation*}
In summary, we obtain
\begin{equation*}
\Delta(\CD_k(M_1,I),\CD_k(M_2,I))
\leq
L\epsilon
\text{ .}
\end{equation*}
\end{proof}
\end{savedenv}

\begin{theorem}\label{thm.ebshvcom.binding}
If $\COM$ and $\VCOM$ of Construction~\ref{const.ebshvcom} are $\epsilon$-extractable-binding, Construction~\ref{const.ebshvcom} is an $\epsilon'$-extractable-binding vector commitment scheme with $\epsilon'(t_1,t_\calE,t_2)=L*\epsilon(t_1 + t_\calE/L,t_\calE/L,t_2)$.
\end{theorem}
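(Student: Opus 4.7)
The plan is to build an extractor for the combined scheme out of the two guaranteed base extractors. Fix a $t_1$-bounded committer $\calA_1$ that on input $k$ produces the combined commitment $c$. I first invoke the $\VCOM$-extractor on $\calA_1$ to obtain a candidate vector $[c_1,\ldots,c_n]$ of inner $\COM$-commitments. Next, for every $i \in [L]$, I form a wrapper $\calB_i$ that on input $k$ runs $\calA_1$, feeds the coins into the $\VCOM$-extractor, and returns $c_i$; applying the $\COM$-extractor to $\calB_i$ yields a candidate message $m_i$. The combined extractor $\calE$ outputs the vector $[m_1,\ldots,m_L]$.

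To argue this extractor succeeds with the stated probability, I would fix any $t_2$-bounded decommitter $\calA_2$ that defeats $\calE$, i.e.\ outputs $(m,c,d,i)$ with $d=(c',d',d'')$ such that $\Verify(k,m,c,d,i)=1$ but $m_i \neq m$. Since the combined $\Verify$ is the conjunction of $\COM.\Verify(k_1,m,c',d')=1$ and $\VCOM.\Verify(k_2,c',c,d'',i)=1$, I split on whether $c' = c_i$. If $c' \neq c_i$, then a reduction that wraps $\calA_2$ (discarding $m$ and $d'$) breaks the extractable binding of $\VCOM$ at position $i$, since $\VCOM$ verifies at position $i$ for a value different from the one the $\VCOM$-extractor produced. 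If $c' = c_i$, then a reduction that wraps $\calA_2$ and outputs $(m,d')$ breaks the extractable binding of $\COM$ when paired with the committer $\calB_i$, since $\COM.\Verify$ accepts but $m \neq m_i$.

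The concluding step is a union bound over $i \in [L]$. Each position contributes at most two failure events, one against $\VCOM$ and one against $\COM$, but both can be bounded by the same $\epsilon$-function evaluated at an adversary that first runs $\calA_1$ and then a single base extractor (time $t_1 + t_\calE/L$) together with an extractor of budget $t_\calE/L$. Summing over the $L$ positions, with the single $\VCOM$ contribution absorbed into the dominant $L$ factor, yields the advertised bound $L\cdot \epsilon(t_1 + t_\calE/L, t_\calE/L, t_2)$.

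The main obstacle I anticipate is the bookkeeping of coins and time budgets between the two extraction layers: the $\COM$-adversary $\calB_i$ must simulate both $\calA_1$ and the $\VCOM$-extractor on shared randomness so that the $\COM$-extractor is invoked on a well-defined probabilistic algorithm whose running time fits in the advertised $t_1 + t_\calE/L$ envelope. Once this plumbing is in place, the probabilistic argument closely mirrors the union bound already used in the proof of Theorem~\ref{thm.ebvcom.eb}, and the rest is a routine reduction.
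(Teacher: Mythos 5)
Your proposal is correct and follows essentially the same route as the paper's proof: the paper likewise builds the combined extractor by first running the $\VCOM$-extractor on $\calA_1$ to recover the inner commitment vector, then defining for each position $i$ a wrapper committer (your $\calB_i$, the paper's $\calA_i$) to which the $\COM$-extractor is applied, and concludes by a union bound over the $L$ positions. Your explicit case split on whether the opened inner commitment equals the extracted one is a slightly more detailed justification of that union bound than the paper spells out, and your caveat about distributing the $t_\calE$ budget across the extraction layers matches the (equally informal) time accounting in the paper.
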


\begin{savedenv}{proof}
\begin{proof}[Proof of Theorem~\ref{thm.ebshvcom.binding}]
Let $t_1,t_\calE,t_2 \in \NN$ and fix any $t_1$-bounded algorithm $\calA_1$ that on input $k$ outputs $c$.

We observe that because $\VCOM$ is $\epsilon$-extractable-binding that there exists a $t_\calE$-bounded extraction algorithm $\calE_0$ such that for any $t_2$-bounded algorithm $\calA_2$, for experiment $\Setup \to k$, $\calA_1(k)\to_r c$, $\calA_2(k,r) \to (c_i,d,i)$, and $\calE_0(k,r) \to C$, we have $\VCOM.\Verify(k_2,\allowbreak c_i,c,d,i)=1$ and $C_i \neq c_i$ with probability at most $\epsilon(t_1,t_\calE,t_2)$.

Define $\calA_i$ as an algorithm that on input $k$, samples $r$, computes $\calE_0(k,r) \to C$, and outputs $C_i$.

For $i>0$, define $\calE_i$ as a $t_\calE$-bounded extraction algorithm such that for any $t_2$-bounded algorithm $\calA_2$, for experiment $K \to k$, $\calA_i(k) \to_r c_i$, $\calE_i(k,r) \to m_i'$, $\calA_2(k,r) \to (m_i,d_i)$, we have $\COM.\Verify(k_1,m_i,c_i,d_i)=1$ and $m_i \neq m_i'$ with probability at most $\epsilon(t_\calE,t_\calE,t_2)$.

Now we define a message vector extraction algorithm $\calE$ as follows.
On input $(k,r)$, first compute $\calE_1(k,r)\to_{r'}C$.
Then, for $i \in [|C|]$, compute $\calE_i(k,[r,r'])\to m_i$.
Output $[m_1,\ldots,m_{|C|}]$.

We observe that for any $t_2$-bounded algorithm $\calA_2$, for experiment $\calA_1(k)\to_r c$, $\calA_2(k,r) \to (m,d,i)$, and $\calE(k,r) \to M$, we have $\Verify(k,m,c,d,i)=1$ and $M_i \neq m$ with probability at most $L*\epsilon(t_1 + t_\calE,t_\calE,t_2)$.

It follows that the vector commitment scheme described in Construction~\ref{const.ebshvcom} is $\epsilon'$-extractable-binding with $\epsilon'(t_1,t_\calE,t_2) = L*\epsilon(t_1 + t_\calE/L,t_\calE/L,t_2)$.
\end{proof}
\end{savedenv}


\section{{\ELSA}: Efficient Long-Term Secure Storage Architecture}\label{sec.elsa}
Now we present {\ELSA}, a long-term secure storage architecture that efficiently protects large datasets.
It provides long-term integrity and long-term confidentiality protection of the stored data.
%
{\ELSA} uses statistically-hiding and extractable-binding vector commitments (as described in Section~\ref{sec.extvcom}) in combination with timestamps to achieve renewable and privacy preserving integrity protection.
The confidential data is stored using proactive secret sharing to guarantee confidentiality protection secure against computational attacks.
The data owner communicates with two subsystems (Figure~\ref{fig.elsa.overview}), where one is responsible for data storage with confidentiality protection and the other one is responsible for integrity protection.
The evidence service is responsible for integrity protection updates and the secret share holders are responsible for storing the data and maintaining confidentiality protection.
The evidence service also communicates with a timestamp service that is used in the process of evidence generation.

\begin{figure}\centering
\includegraphics[width=0.9\columnwidth,height=1.5in,keepaspectratio]{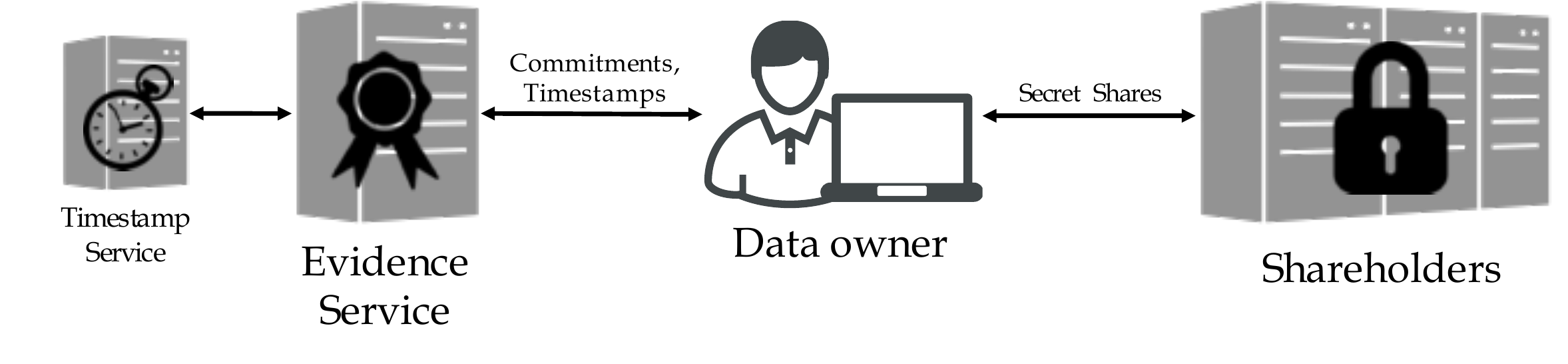}
\caption{Overview of the components of \ELSA{}.\label{fig.elsa.overview}}
\end{figure}

\subsection{Construction}
We now describe the storage architecture {\ELSA} in terms of the algorithms $\Init$, $\Store$, $\RenewTs$, $\RenewCom$, $\RenewShares$, and $\Verify$.
Algorithm $\Init$ initializes the architecture, $\Store$ allows to store new files, $\RenewTs$ renews the protection if the timestamp scheme security is weakened, $\RenewCom$ renews the protection if the commitment scheme security is weakened, $\RenewShares$ renews the shares to protect against a mobile adversary who collects multiple shares over time, and $\Verify$ verifies the integrity of a retrieved file.

We use the following notation. When we write $\SH.\Store(\name,\dat)$ we mean that the data owner shares the data $\dat$ among the shareholders using protocol $\SHARE.\Share$ associated with identifier $\name$.
If the shared data $\dat$ is larger then the size of the message space of the secret sharing scheme, $\dat$ is first split into chunks that fit into the message space and then the chunks are shared individually.
Each shareholder maintains a database that describes which shares belong to which data item name.
When we write $\SH.\Retrieve(\name)$, we mean that the data owner retrieves the shares associated identifier $\name$ from the shareholders and reconstructs the data using protocol $\SHARE.\Reconstruct$.

\subsubsection{Initialization}
The data owner uses algorithm $\ELSA.\Init$ (Algorithm~\ref{alg.init}) to initialize the storage system. The algorithm gets as input a proactive secret sharing scheme $\SHARE$, a set of shareholder addresses $(\mathrm{shURL}_i)_{i \in [N]}$, a sharing threshold $T$, and an evidence service address $\mathrm{esURL}$.
It then initializes the storage module $\SH$ by running protocol $\SHARE.\Setup$ and the evidence service module $\ES$ by setting $\ES.\evidence$ as an empty table and $\ES.\renewLists$ as an empty list.

\begin{savedenv}{algorithm}
\begin{algorithm}
\caption{\label{alg.init}$\ELSA.\Init(\SHARE,(\mathrm{shURL}_i)_{i \in [N]},T,\mathrm{esURL})$}
$\SH.\Init(\SHARE,(\mathrm{shURL}_i)_{i \in [N]},T)$\;
$\ES.\Init(\mathrm{esURL})$\;
\end{algorithm}
\end{savedenv}


\subsubsection{Data storage}
The client uses algorithm $\ELSA.\Store$ (Algorithm~\ref{alg.store}) to store a set of data files $[\file_i]_{i\in[n]}$, which works as follows.
First a signature scheme $\SIG$, a vector commitment scheme $\VCOM$, and a timestamp scheme $\TS$ are chosen.
Here, we assume that $\SIG$ is supplied with the secret key necessary for signature generation and $\VCOM$ is supplied with the public parameters necessary for commitment generation.
The algorithm first signs each of the data objects individually.
The algorithm then stores the file data, the public key certificate of the signature scheme instance, and the generated signature at the secret sharing storage system.
Afterwards, the algorithm generates a vector commitment $(c,D)$ to the file data vector and the signatures.
For each file, the corresponding decommitment is extracted and stored at the shareholders.
The file names $\names$, the commitment scheme instance $\VCOM$, the commitment $c$, and the chosen timestamp scheme instance $\TS$ are sent to the evidence service.
We remark that signing each data object individually potentially allows for having a different signer for each data object. If the data objects are to be signed by the same user, an alternative is to sign the commitment instead of the data objects.

When the evidence service receives $(\names,\VCOM,c,\TS)$, it does the following in algorithm $\AddCom$ (Algorithm~\ref{alg.es.add}).
It first timestamps the commitment $(\VCOM,c)$ and thereby obtains a timestamp $\tst$.
Then, it starts a new evidence list $l=[(\VCOM, c, \TS, \tst)]$ and assigns this list with all the file names in $\names$.
Also, it adds $l$ to the list $\renewLists$, which contains the lists that are updated on a timestamp renewal.

\begin{savedenv}{algorithm}
\begin{algorithm}
\caption{\label{alg.store}$\ELSA.\Store([\file_i]_{i\in[n]},\SIG,\VCOM,\TS)$}
$\names \gets \{\}$\;
\For{$i\in[n]$}
{
$\SIG.\Sign(\file_i.\dat) \to s_i$\;
$\SH.\Store([\text{'data'},\file_i.\name],[\file_i.\dat,\SIG.Cert,s_i])$\;
$\names \pluseq \file_i.\name$\;
}

$\VCOM.\Commit([\file_i.\dat,\SIG.Cert,s_i]_{i\in[n]}) \to (c,D)$\;

\For{$i\in[n]$}
{
$\VCOM.\Extract(D,i) \to d$\;
$\SH.\Store([\text{'decom'},\file_i.\name,i],d)$\;
}

$\ES.\AddCom(\names,\VCOM,c,\TS)$\;
\end{algorithm}
\end{savedenv}

\begin{savedenv}{algorithm}
\begin{algorithm}
\caption{\label{alg.es.add}$\ES.\AddCom(\names,\VCOM,c,\TS)$}
$\TS.\Stamp((\VCOM,c)) \to \tst$\;
$l \gets [(\VCOM, c, \TS, \tst)]$\;
\For{$\name \in \names$}
{
$\evidence[\name] \gets l$\;
$\renewLists \pluseq l$\;
}
\end{algorithm}
\end{savedenv}

\subsubsection{Timestamp renewal}
Algorithm $\ES.\RenewTs$ (Algorithm~\ref{alg.renewts}) is performed by the evidence service regularly in order to protect against the weakening of the currently used timestamp scheme.
The algorithm gets as input a vector commitment scheme instance $\VCOM'$ and a timestamp scheme instance $\TS$.
It first creates a vector commitment $(c',D')$ for the list of renewal items $\renewLists$.
Here, we only require the extractable-binding property of $\VCOM'$, while the hiding property is not required as all of the data stored at the evidence service is independent of the secret data due to the use of unconditionally hiding commitments by the data owner.
For each updated list item $i$, the freshly generated timestamp, commitment, and extracted decommitment are added to the corresponding evidence list $\renewLists[i]$.

\begin{savedenv}{algorithm}
\begin{algorithm}
\caption{\label{alg.renewts}$\ES.\RenewTs(\VCOM',\TS)$}

$\VCOM'.\Commit(\renewLists) \to (c',D')$\;
$\TS.\Stamp((\VCOM',c')) \to \tst$\;

\For{$i \in [|\renewLists|]$}
{
$\VCOM'.\Extract(D',i) \to d'$\;
$\renewLists[i] \pluseq (\VCOM',c',d',\TS,\tst)$\;
}
\end{algorithm}
\end{savedenv}

\subsubsection{Commitment renewal}
The data owner runs algorithm $\ELSA.\RenewCom$ (Algorithm~\ref{alg.renewcom}) to protect against a weakening of the currently used commitment scheme.
It chooses a new commitment scheme instance $\VCOM$ and a new timestamp scheme instance $\TS$ and proceeds as follows.
First the table of evidence lists $\ES.\evidence$ are retrieved from the evidence service and complemented with the decommitment values stored at the shareholders.
Next, a list with the data items, the signatures, and the current evidence for each data item is constructed.
This list is then committed using the vector commitment scheme $\VCOM$.
The decommitments are extracted and stored at the shareholders, and the commitment is added to the evidence at the evidence service using algorithm $\ES.\AddComRenew$.

\begin{savedenv}{algorithm}
\begin{algorithm}
\caption{\label{alg.renewcom}$\ELSA.\RenewCom(\VCOM,\TS)$}

$\comIds \gets \{\}$;
$\comCount \gets \{\}$;
$L \gets []$\;

\For{$\name \in \ES.\evidence$}
{
$\SH.\Retrieve([\text{'data'},\name]) \to (\dat,\SIG,s)$\;
$\ES.\evidence[\name] \to e$\;
\For{$i \in |e|$}
{
\If{$e_i.\VCOM \neq \bot$}
{$\SH.\Retrieve([\text{'decom'},\name,i]) \to e_i.d$\;}
}
$L \pluseq (\dat,\SIG,s,e)$\;
$\comIds[\name] \gets |L|$\;
$\comCount[\name] \gets |e|$\;
}

$\VCOM.\Commit(L) \to (c,D)$\;

\For{$\name \in \ES.\evidence$}
{
$\VCOM.\Extract(D,\comIds[\name]) \to d$\;
$\SH.\Store([\text{'decom'},\name,\comCount[\name]],d)$\;
}

$\ES.\AddComRenew(\VCOM,c,\TS)$\;
\end{algorithm}
\end{savedenv}

\begin{savedenv}{algorithm}
\begin{algorithm}
\caption{\label{alg.es.addcomrenew}$\ES.\AddComRenew(\VCOM,c,\TS)$}
$\TS.\Stamp((\VCOM,c)) \to \tst$\;
$l \gets [(\VCOM, c, \TS, \tst)]$\;
$\renewLists \gets [l]$\;
\For{$\name \in \evidence$}
{
$\evidence[\name] \pluseq l$\;
}
\end{algorithm}
\end{savedenv}


%
%
%
%


\subsubsection{Secret share renewal}

There are two types of share renewal supported by $\ELSA$.
The first type (Algorithm~\ref{alg.reshare}) triggers the share renewal protocol of the secret sharing system (i.e., the protocol $\SHARE.\Reshare$).
This interactive protocol refreshes the shares at the shareholders so that old shares, which may have leaked already, cannot be combined with the new shares, which are obtained after the protocol has finished, to reconstruct the stored data.
The second type (Algorithm~\ref{alg.resharing}) replaces the proactive sharing scheme entirely. This may be necessary if the scheme has additional security properties like verifiability (see proactive verifiable secret sharing \cite{herzberg1995proactive}), whose security may be weakened.
In this case, the data is retrieved, shared to the new shareholders, and finally the old shareholders are shutdown.

\begin{savedenv}{algorithm}
\begin{algorithm}
\caption{\label{alg.reshare}$\ELSA.\RenewShares()$}
$\SH.\Reshare()$\;
\end{algorithm}
\end{savedenv}

\begin{savedenv}{algorithm}
\begin{algorithm}
\caption{\label{alg.resharing}$\ELSA.\RenewSharing(\SHARE,(\mathrm{shURL}_i)_{i \in [N]},T)$}
$\SH'.\Init(\SHARE,(\mathrm{shURL}_i)_{i \in [N]},T)$\;
$I \gets \ES.\itemInfos$\;
\For{$\mathrm{name} \in I$}
{
$\SH.\Retrieve(\text{'data/'}+\mathrm{name}) \to \dat$\;
$\SH'.\Store(\text{'data/'}+\mathrm{name}, \dat)$\;
}
$\SH.\mathsf{Shutdown}()$\;
$\SH \gets \SH'$\;
\end{algorithm}
\end{savedenv}

\subsubsection{Data retrieval}
The algorithm $\ELSA.\Retrieve$ (Algorithm~\ref{alg.retrieve}) describes the data retrieval procedure of $\ELSA$.
It gets as input the name of the data file that is to be retrieved.
It then collects the evidence from the evidence service and the data from the shareholders.
Next, the evidence is complemented with the decommitments and then the algorithm outputs the data with the corresponding evidence.

\begin{savedenv}{algorithm}
\begin{algorithm}
\caption{\label{alg.retrieve}$\ELSA.\Retrieve(\name)$}
$e \gets \ES.\evidence[\name]$\;

\For{$i \in [|e|]$}
{
\If{$e_i.\VCOM \neq \bot$}{$\SH.\Retrieve([\text{'decom'},\name,i]) \to e_i.d$\;}
}

$\SH.\Retrieve([\text{'data'},\mathrm{name}]) \to (\dat,\SIG,s)$\;
$E \gets (\SIG,s,e)$\;

\KwRet $(\dat,E)$\;
\end{algorithm}
\end{savedenv}

\subsubsection{Verification}

Algorithm $\ELSA.\Verify$ (Algorithm~\ref{alg.verify}) describes how a verifier can check the integrity of a data item using the evidence produced by $\ELSA$.
Here we denote by $\mathrm{NTT}(i,e,t_\verify)$ the time of the next timestamp after entry $i$ of $e$ and by $\mathrm{NCT}(i,e,t_\verify)$ the time of the timestamp corresponding to the next commitment after entry $i$, and we set $\mathrm{NTT}(i,e,t_\verify) = t_\verify$ if $i$ is the last timestamp and $\mathrm{NCT}(i,e,t_\verify) = t_\verify$ if $i$ is the last commitment in $e$.
The algorithm gets as input a reference to the considered $\PKI$ (e.g., a trust anchor), the current verification time $t_\verify$, the data to be checked $\dat$, the storage time $t_\store$, and the corresponding evidence $E=(\SIG, s, e)$.
The algorithm returns true, if $\dat$ is authentic and has been stored at time $t_\store$.

In more detail, the verification algorithm works as follows.
It first checks whether the signature $s$ is valid for the data object $\dat$ under signature scheme instance $\SIG$ at the time of the first timestamp of the evidence list $e$.
It also checks whether the corresponding commitment is valid for $(\dat,\SIG,s)$ at the time of the next commitment and the timestamp is valid at the next timestamp.
Then, for each of the remaining $|e|-1$ entries of $e$, the algorithm checks whether the corresponding timestamp is valid at the time of the next timestamp and whether the corresponding commitments are valid at the time of the next commitments.
The algorithm outputs $1$ if all checks return valid, and it outputs $0$ in any other case.

\begin{savedenv}{algorithm}
\begin{algorithm}
\caption{\label{alg.verify}$\ELSA.\Verify(\PKI, t_\verify : \dat, t_\store, E) \to b$}
$(\SIG,s,e) \gets E$\;
$((\VCOM,c,d),(\VCOM',c',d'),(\TS,\tst)) \gets e_1$\;\vskip 1mm
$t_\mathrm{nt} \gets \mathrm{NTT}(1,e,t_\verify)$;
$t_\mathrm{nc} \gets \mathrm{NCT}(1,e,t_\verify)$\;
$b \gets \SIG.\Verify(\PKI, \tst.t : \dat, s)$\;
$b \andeq \VCOM.\Verify(\PKI, t_\mathrm{nc} : (\dat,\SIG,s), c, d)$\;
$b \andeq \TS.\Verify(\PKI, t_\mathrm{nt} : c, \tst, t_\store)$\;
$L \gets (\VCOM, c, \TS, \tst)$\;\vskip 1mm

\For{$i\in[2,\ldots,|e|]$}
{
$((\VCOM,c,d),(\VCOM',c',d'),(\TS,\tst)) \gets e_i$\;
$t_\mathrm{nt} \gets \mathrm{NTT}(i,e,t_\verify)$;
$t_\mathrm{nc} \gets \mathrm{NCT}(i,e,t_\verify)$\;

\uIf{$\VCOM=\bot$}{
$b \andeq \VCOM'.\Verify(\PKI, t_\mathrm{nt} : L, c', d')$\;
$b \andeq \TS.\Verify(\PKI, t_\mathrm{nt} : c', \tst, \tst.t)$\;
$L \pluseq (\VCOM',c',d',\TS,\tst)$\;
}
\Else{
$\dat' \gets (\dat,Cert,s,e[1,i-1])$\;
$b \andeq \VCOM.\Verify(\PKI, t_\mathrm{nc} : \dat', c, d)$\;
$b \andeq \TS.\Verify(\PKI, t_\mathrm{nt} : c, \tst, \tst.t)$\;
$L \gets (\VCOM, c, \TS, \tst)$\;
}

}

\KwRet $b$\;
\end{algorithm}
\end{savedenv}

\subsection{Security analysis}


\subsubsection{Computational model}
In a long-running system we have to consider adversaries that increase their computational power over time. For example, they may increase their computation speed or acquire new computational devices, such as quantum computers.
We capture this by using the computational model from \cite{buldas2017ltcom}.


A real-time bounded long-lived adversary $\calA$ is defined as a sequence $(\calA_{(0)}, \calA_{(1)},\allowbreak \calA_{(2)}, \ldots)$ of machines with $\calA_{(t)} \in \calM_t$ and is associated with a global clock $\Clock$.
We assume that the class $\calM_t$ of computing machines available at time $t$ widens when $t$ increases, i.e., $\calM_t \subseteq \calM_{t'}$ for $t<t'$.
The adversary is given the power to advance the clock, but it cannot go backwards in time.
When $\calA^\Clock$ is started, then actually the component $\calA_{(0)}$ is run.
Whenever a component $\calA_{(t)}$ calls the clock oracle to set a new time $t'$, then component $\calA_{(t)}$ is stopped and the component $\calA_{(t')}$ is run with input the internal state of $\calA_{(t)}$.
Real-time computational bounds are expressed as follows.
Let $\calA^\mathsf{Clock}$ be a computing machine associated with $\mathsf{Clock}$, and let $\rho:\NN \to \NN$ be a function.
We say $\calA$ is \emph{$\rho$-bounded} if for every time $t$, the aggregated step count of the machine components of $\calA$ until time $t$ is at most $\rho(t)$.

%

\subsubsection{Integrity}
First, we analyze integrity protection, by which we mean that it should be infeasible for an adversary to forge a valid evidence list for a data object, but the data object has not been authentically signed at the claimed time.
This is captured in the following definition, where we define security with respect to a set of schemes $\calS$ (e.g., commitment schemes, signature schemes, timestamp schemes) that are available within the context of the adversary. Also, we assume that each scheme instance $\calS_i$ is associated with a breakage time $t^b_i$ after which it is considered insecure.
The experiment (Algorithm~\ref{alg.elsa.unforge}) has a setup phase, where all the scheme instances are initialized by means of parameter generation. We write $\calS_i.\Setup() \to (sk,pk)$ to denote that a scheme potentially generates a secret parameter $sk$ (e.g., a private signing key) and a public parameter $pk$ (e.g., a public verification key or the parameters of a commitment scheme instance).
We allow the adversary to access the secret parameters of an instance once it is considered insecure (via oracle $\Break$).

\begin{definition}[Integrity]
We say {\ELSA} is $(\calM,\epsilon)$-unforgeable for schemes $\calS$, if for any $p$-bounded machine $\calA\in\calM$,
\begin{equation*}
\Pr\left[
\Exp^\Forge_\calS(\calA) = 1
\right]
\leq \epsilon(p) \text{ .}
\end{equation*}
The experiment $\Exp^\Forge$ is described in Algorithm~\ref{alg.elsa.unforge}.
\begin{algorithm}[h]
\caption{$\Exp^\Forge_\calS(\calA)$ \label{alg.elsa.unforge}}
\begin{minipage}{18em}
    $\mathsf{SetupExperiment}()$\;
    $\calA^{\Clock,\PKI,\SIG,\TS,\Break} \to (\dat,t_\store,E)$\;
    $t_\verify \gets \thetime$\;
    $b \gets \Verify(\PKI,t_\verify;\dat, t_\store, E) \land \dat \not\in Q_{t_\store}$\;
    \KwRet $b$\;
\end{minipage}
\hskip 1mm
\fbox{\begin{minipage}{13em}
\underline{$\mathsf{SetupExperiment}()$:}\\
$\thetime \gets 0$\;
\For{$i \in [|\calS|]$}{
$\calS_i.\Setup \to (sp,pp)$\;
$SP[i] \gets sp$;
$PP[i] \gets pp$\;
}
\end{minipage}}
\hskip 1mm
\fbox{\begin{minipage}{8em}
\underline{$\Clock(t)$:}\\
\If{$t>\thetime$}{$\thetime \gets t$\;}
\end{minipage}}
\hskip 1mm
\fbox{\begin{minipage}{7em}
\underline{$\PKI(i)$:}\\
\KwRet $PP[i]$\;
\end{minipage}}
\hskip 1mm
\fbox{\begin{minipage}{10em}
\underline{$\SIG(i,m)$:}\\
Assert $\calS_i.type=sig$\;
$Q[\thetime] \pluseq m$\;
$\calS_i.\Sign(SP[i];m) \to s$\;
\KwRet $s$\;
\end{minipage}}
\hskip 1mm
\fbox{\begin{minipage}{10em}
\underline{$\TS(i,m)$:}\\
Assert $\calS_i.type=ts$\;
$\Sign(i;[\thetime,m]) \to s$\;
\KwRet $(\thetime,s)$\;
\end{minipage}}
\hskip 1mm
\fbox{\begin{minipage}{9em}
\underline{$\Break(i)$:}\\
\If{$\thetime \geq \calS_i.tb$}{\KwRet $SP[i]$\;}
\end{minipage}}
\vskip 1mm
\end{algorithm}
\end{definition}

Next, we prove that the security of $\ELSA$ can be reduced to the extractable binding security of the used commitment schemes and the unforgeability of the used signature schemes within their validity period.

\begin{theorem}\label{thm.elsa.integrity}
Let $\calM=(\calM_t)_t$ specify which computational technology is available at which point in time and let $\calS$ be a set of cryptographic schemes, where each scheme $\calS_i \in \calS$ is associated with a breakage time $t^b_i$ and is $\epsilon_i$-secure against adversaries using computational technology $\calM_{t^b_i}$. In particular, we require unforgeability-security for signature schemes and extractable-binding-security for commitment schemes.
Let $p_E$ be any computational bound and $L$ be an upper bound on the maximum vector length of the commitment schemes in $\calS$.
Then, {\ELSA} is $(\calM,\epsilon)$-unforgeable for $\calS$ with $\epsilon(p)=(\sum_{i\in\SIG}\epsilon_i(p(t^b_i)p_E(t^b_i)L^2)) + (\sum_{i\in\COM}\epsilon_i(p(t^b_i),p_E(t^b_i),p(t^b_i)))$.
\end{theorem}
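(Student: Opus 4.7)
The plan is to reduce any successful $\ELSA$ forger $\calA$ to a breaker of either the unforgeability of some signature scheme $\calS_i \in \SIG$ or the extractable-binding of some vector commitment scheme $\calS_i \in \COM$, where the break happens within the scheme's security lifetime $t^b_i$; the bound in the theorem then follows by a union bound over all scheme instances in $\calS$.

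First I would unpack the chain of checks enforced by Algorithm~\ref{alg.verify} on a winning transcript $(\dat,t_\store,E)$ with $E=(\SIG,s,e)$. The algorithm forces $s$ to verify on $\dat$ under $\SIG$ at the time of the first timestamp, and each subsequent evidence entry either renews only the timestamp (binding the previous tuple $(\VCOM,c,\TS,\tst)$ inside a fresh outer vector commitment) or renews the commitment itself (binding $(\dat,\SIG,s)$ together with the full prefix of the evidence inside a fresh outer vector commitment). Reading this chain backwards yields a sequence of bindings that ties the final outermost commitment all the way back to the signed tuple $(\dat,s)$.

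Next, for every commitment instance appearing in this chain I would invoke the extractable-binding guarantee to obtain an extractor that, at the moment the commitment is formed, already knows the committed vector of messages. Iterating over the at-most-$L$ positions of the at-most-$L$ vector commitments in the chain is what produces the $L^2$ factor in the final signature-forger's running time. I would then run a hybrid argument indexed by the commitments in the chain: the bad event at the $i$-th hybrid is that $\calA$ eventually opens the commitment to a message different from the one produced by the extractor at commit time. Because the commit is formed by real time at most $t^b_i$, the extractor uses $p_E(t^b_i)$ steps, and the corresponding opening is locked in by a subsequent renewal also happening by time $t^b_i$, each such event contributes at most $\epsilon_i(p(t^b_i),p_E(t^b_i),p(t^b_i))$, yielding the second sum in the claimed bound. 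Conditioned on no such event occurring, the extracted chain is fully consistent and terminates at a signature $s$ on $\dat$ that verifies under some $\calS_i \in \SIG$ at time $t_\store \le t^b_i$; since $\dat \notin Q_{t_\store}$, the pair $(\dat,s)$ is a fresh forgery against $\calS_i$, which a reduction that simulates the experiment together with every commitment-extractor can extract in time at most $p(t^b_i)p_E(t^b_i)L^2$, giving the first sum.

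The main obstacle will be bookkeeping: by Theorem~\ref{thm.ebshvcom.binding} every vector commitment extractor already loses a factor $L$ (and an additive $t_\calE/L$ in its running-time slots), so when composing extractors across a chain of renewals one must carefully track that the cumulative real-time cost of each reduction stays below the appropriate scheme lifetime $t^b_i$, and that every commitment along the chain is pinned to a concrete real time by the surrounding timestamps so that the correct per-scheme $\epsilon_i(\cdot)$ can be invoked. Handling the two renewal modes of Algorithm~\ref{alg.verify} uniformly in the hybrid sequence is the delicate part of the argument.
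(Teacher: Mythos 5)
Your proposal is correct and follows essentially the same route as the paper: the paper likewise builds, for each signature scheme $i$, a reduction $\calB_i$ that simulates the experiment, runs the $p_E$-bounded extractors on every commitment sent to the timestamp service (accounting for the $L^2$ factor from at most $L$ commitments with at most $L$ positions each), and argues from the structure of the verification chain that any winning run either yields a fresh signature forgery before some $t^b_i$ or an extraction failure charged to a commitment scheme's extractable-binding bound. Your hybrid-over-commitments phrasing is just a reorganization of the same union bound, so there is nothing substantively different to flag.
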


\begin{savedenv}{proof}
\begin{proof}[Proof of Theorem~\ref{thm.elsa.integrity}]
Suppose any $p$-bounded machine $\calA$ that interacts with interfaces $\Clock$, $\PKI$, $\SIG$, and $\TS$, and outputs $(\dat,\allowbreak t_\store,E)$.
For each signature scheme $i \in \SIG$, construct a machine $\calB_i$ with the goal to break the unforgeability of scheme $i$ until time $t^b_i$.
$\calB_i$ in the signature unforgeability experiment gets as input a public key $pk$ and access to a signing oracle $\Sign$. Its goal is to output $(m,s)$ such that $\calS_i.\Verify(pk,m,s)=1$ and the oracle $\Sign$ was not queried with $m$.
$\calB_i$, on input $pk$, does the following. It runs $\calA$ until time $t^b_i$ and simulates the environment of $\calA$ with the following difference. $\calB_i$ sets the public key of signature scheme $i$ to $pk$ and whenever the simulation of the experiment for $\calA$ requires the generation of a signature for scheme $i$, $\calB_i$ requests the signature from the oracle $\Sign$.
While $\calA$ is running, $\calB_i$ searches the outputs of $\calA$ for a valid message-signature-pair, where the message has not been queried to the signing oracle thus far.
$\calB_i$ also uses the extractable-binding property of the commitment schemes, as follows.
Whenever, $\calA$ queries the timestamp service $\TS$ with a commitment, then $\calB_i$ uses a $p_E$-bounded commitment message extractor to extract the corresponding messages out of $\calA$.
Let $L$ be an upper bound on the maximum length supported by all the used vector commitment schemes $\COM$.
Then, $\calB_i$ runs in at most $p_{\calB_i}=p(t^b_i)*p_E(t^b_i)*L^2$ operations and adheres to the computational model $\calM_{t^b_i}$.

We observe that an evidence object $E$ is a sequence $(\SIG,s,\allowbreak CDT_1,\allowbreak\ldots,\allowbreak CDT_n)$, where $CDT_i=(\VCOM_i,c_i,d_i,\VCOM'_i,c'_i,d'_i,\TS_i,\tst_i)$.
Define $CT_i=(\VCOM'_i,c'_i,d_i',\TS_i,\tst_i)$.
The verification algorithm of {\ELSA} ensures that $\tst_i$ is a valid timestamp for $(\VCOM_j,c_j,CT_j,\ldots,CT_{i-1})$, where $j$ is the largest index such that $\VCOM_j\neq\bot$ and $j \leq i$.
If $\VCOM_j\neq\bot$, then it also ensures that $d_i$ is a valid opening of $c_i$ to $(\dat,\SIG,s,CDT_1,\allowbreak\ldots,CDT_{i-1})$.
It follows that in every run in which $\calA$ outputs $(\dat,\allowbreak t_\store,E)$ with $\Verify(\PKI,t_\verify;\dat, t_\store,\allowbreak E)=1$ and $\dat \not\in Q_{t_\store}$, there is at least one $\calB_i$ that wins its unforgeability experiment before time $t^b_i$ or at least one of the extractors used by $\calB_i$ fails.
Hence, the probability that $\calA$ breaks {\ELSA} is upper bounded by $(\sum_{i\in\SIG}\epsilon_i(p_{\calB_i})) + (\sum_{i\in\COM}\epsilon_i(p(t^b_i),p_E(t^b_i),p(t^b_i)))$, where $\COM$ denotes the set of commitment schemes and $\SIG$ denotes the set of signature schemes of $\calS$.
\end{proof}
\end{savedenv}

\subsubsection{Confidentiality}
Next, we analyze confidentiality protection of $\ELSA$.
Intuitively, we require that an adversary with unbounded computational power who observes the data that is received by the evidence service and a subset of the shareholders does not learn any substantial information about the stored data. In particular, it should be guaranteed that an adversary does not learn anything about unopened files even if it retrieves some of the other files and the corresponding signatures, commitments, and timestamps.

We model this intuition by requiring that any (unbounded) adversary $\calA$ should not be able to distinguish whether it interacts with a system that stores a file vector $F_1$ or a system that stores another file vector $F_2$, if $\calA$ only opens a subset $I$ of files and $F_1$ and $F_2$ are identical on $I$.
This is modeled in experiment $\Exp^\mathsf{DIST}$ (Algorithm~\ref{alg.elsa.dist}), where we use the following notation.
For a secret sharing scheme $\SHARE$ we denote by $\SHARE.AS$ the set of authorized shareholder subsets that can reconstruct the secret.
For a protocol $\mathsf{P}$, we write $\langle \mathsf{P} \rangle_\View$ to denote an execution of $\mathsf{P}$ where $\View$ contains all the data sent and received by the involved parties. For an involved party $\mathcal{P}$, we write $\View(\mathcal{P})$ to denote the data sent and received by party $\mathcal{P}$.

\begin{definition}[Confidentiality]
We say {\ELSA} is $\epsilon$-statistically-hiding for $\calS$, if for all machines $\calA$, subsets $I$, sets of files $F_1,F_2$ with $(F_1)_I=(F_2)_I$, for all $L \in \NN$,
\begin{equation*}
\left|
\Pr\left[ \Exp^\mathsf{Dist}_{\calS,L}(\calA,F_1,I)=1 \right]
-
\Pr\left[ \Exp^\mathsf{Dist}_{\calS,L}(\calA,F_2,I)=1 \right]
\right| \leq \epsilon(L)
\text{ .}
\end{equation*}
The experiment $\Exp^\mathsf{DIST}$ is described in Algorithm~\ref{alg.elsa.dist}.

\begin{algorithm}[h]
\caption{$\Exp^\mathsf{Dist}_{\calS,L}(\calA,Files,I)$\label{alg.elsa.dist}}
\begin{minipage}{14em}
\begin{minipage}{9em}
    $\mathsf{SetupExperiment}()$\;
    $\calA^{\Clock,\ELSA'} \to b$\;
    \KwRet $b$\;
\end{minipage}
\vskip 1mm
\fbox{\begin{minipage}{13em}
\underline{$\mathsf{SetupExperiment}()$:}\\
$\thetime \gets 0$; $N \gets 0$\;
Generate parameters for $\calS$\;
$\ELSA.\Init(\calS.\SHARE,\SH,\ES)$\;
\end{minipage}}
\vskip 1mm
\fbox{\begin{minipage}{8em}
\underline{$\Clock(t)$:}\\
\If{$t>\thetime$}{$\thetime \gets t$\;}
\end{minipage}}
\end{minipage}
\hskip 1mm
\fbox{\begin{minipage}{25em}
\underline{$\ELSA'(op,param)$:}\\
\If{$N<L$}{
$N \pluseq 1$; $\calT \gets \calT'$\;
\uIf{$op=\Store \land param \not\in \calS.\SHARE.AS$}{
$\langle \ELSA.\Store(Files,param) \rangle_\View$;
$\calT \gets param$\;
}
\uElseIf{$op=\Retrieve \land param \in I$}{
$\langle \ELSA.\Retrieve(param) \rangle_\View$\;
}
\uElseIf{$op=ReTs$}{
$\langle \ELSA.\RenewTs(param) \rangle_\View$\;
}
\uElseIf{$op=ReCom$}{
$\langle \ELSA.\RenewCom(param) \rangle_\View$\;
}
\ElseIf{$op=ReShare \land param \not\in \calS.\SHARE.AS$}{
$\langle \ELSA.\RenewShares() \rangle_\View$;
$\calT' \gets param$\;
}
\KwRet $\View(\ES,\SH_\calT,Receiver)$\;
}
\end{minipage}}
\vskip 1mm
\end{algorithm}
\end{definition}

Next, we show that $\ELSA$ indeed provides confidentiality protection as defined above if statistically hiding secret sharing and commitment schemes are used.

\begin{theorem}\label{thm.elsa.confidentiality}
Let $\calS$ be a set of schemes, where $\calS.\SHARE$ is an $\epsilon$-statistically-hiding secret sharing scheme and every commitment scheme in $\calS$ is $\epsilon$-statistically-hiding.
Then, \ELSA{} is $\epsilon'$-statistically-hiding for $\calS$ with $\epsilon'(L)=2L\epsilon$.
\end{theorem}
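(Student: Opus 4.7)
The plan is to bound the statistical distance between the adversary's view when $\ELSA$ stores $F_1$ versus $F_2$ via a hybrid argument that swaps the two file vectors one operation at a time.

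First, I would identify which components of the adversary's view $\View(\ES,\SH_\calT,\mathrm{Receiver})$ depend on the file contents. The Receiver only observes the output of $\Retrieve$ calls restricted to names in $I$, and because $(F_1)_I=(F_2)_I$ by hypothesis, that portion of the view is identical in the two worlds. Because the oracle enforces $\calT\notin\calS.\SHARE.AS$ whenever a new $\calT$ is chosen, the subset $\SH_\calT$ is always non-authorized, so the $\epsilon$-statistical-hiding of $\calS.\SHARE$ applies to the shares received there. The messages delivered to $\ES$ are the vector commitments $c$ together with the openings of positions in $I$, which is exactly the view covered by statistical hiding under selective opening of the commitment scheme.

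Second, I would construct a sequence $H_0,H_1,\ldots,H_L$, where $H_0$ runs the experiment with $F_1$, $H_L$ with $F_2$, and $H_k$ agrees with $H_{k-1}$ except that the $k$-th oracle call is simulated as though the file vector were $F_2$. For a $\Store$ or $\RenewCom$ call, the transition from $H_{k-1}$ to $H_k$ is split into two sub-hybrids: first the shares produced for $\SH_\calT$ (including shares of the file data and of the extracted decommitments) are replaced, which changes the view by at most $\epsilon$ by the hiding of $\calS.\SHARE$; then the vector commitment and its openings at positions in $I$ are replaced, which changes the view by at most $\epsilon$ by the hiding of the commitment scheme. A $\RenewTs$ call operates only on the (file-independent) renewal list, so it contributes zero, and a $\RenewShares$ call invokes $\SHARE.\Reshare$, whose transcript seen by a non-authorized shareholder subset is again bounded by the hiding of the proactive secret sharing scheme and fits within the same per-operation $2\epsilon$ budget. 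Summing the per-operation contributions via the triangle inequality yields $\Delta\leq 2L\epsilon$, which is precisely $\epsilon'(L)$.

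The main obstacle is that a single file can leak through several operations: a file stored by an early $\Store$ reappears as input to every subsequent $\RenewCom$, and its decommitment shares are refreshed by $\RenewShares$. To keep the hybrid clean, one must argue that the $2\epsilon$ budget is spent per operation rather than per (file, operation) pair. This works because every operation samples independent randomness for its commitment and its share generation, so conditioning on the view of all previous operations (which is identical across the two adjacent hybrids) leaves the commitment and the shares produced by the current operation freshly random, and the relevant hiding property applies directly to that operation's output. A careful formalization would also track how $\RenewCom$ splits its fresh decommitments between $\ES$ (which only sees the new root commitment) and $\SH_\calT$ (which sees the shares of the per-file decommitments), matching each piece to the corresponding hiding guarantee before invoking the bound.
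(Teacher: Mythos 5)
Your proposal is correct and follows essentially the same route as the paper: the paper's own proof is a short sketch that asserts exactly your per-call decomposition --- at most $\epsilon$ divergence in the evidence-service/receiver view from the statistical hiding of the commitments, at most $\epsilon$ in the unauthorized shareholders' view from the hiding of the secret sharing, hence $2\epsilon$ per query and $2L\epsilon$ over $L$ queries. Your hybrid formalization and the discussion of cross-operation dependencies simply make explicit the steps the paper leaves implicit.
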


\begin{savedenv}{algorithm}
\begin{proof}[Proof of Theorem~\ref{thm.elsa.confidentiality}]
We observe that the statistical distance between the view of the evidence service and the receiver for $F_1$ and the view for $F_2$ is at most $\epsilon$ per call to $\ELSA'$ because the vector commitment schemes in $\calS$ are $\epsilon$-statistically-hiding. The statistical distance between the view of an unauthorized subset of shareholders for $F_1$ and the view of the same subset of shareholders for $F_2$ is also at most $\epsilon$ because the secret sharing scheme is $\epsilon$-statistically-hiding.
It follows that the statistical distance for each query of the adversary diverges by at most $2\epsilon$.
Hence, overall the statistical distance is bounded by $2L\epsilon$.
\end{proof}
\end{savedenv}

With regards to protecting the network communication between the data owner and the shareholders we ideally require that information theoretically channels are used (e.g., based on Quantum Key Distribution and One-Time-Pad Encryption \cite{RevModPhys.74.145}), so that a network provider, who could potentially intercept all of the secret share packages, cannot learn any information about the communicated data. If information theoretically secure channels are not available, we recommend to use very strong symmetric encryption (e.g., AES-256 \cite{AES}).


\section{Performance evaluation}
\label{sec.performance}
We compare the performance of our new architecture \ELSA{} with the performance of the storage architecture {\LINCOS} \cite{Braun:2017:LSS:3052973.3053043}, which is the fastest existing storage architecture that provides long-term integrity and long-term confidentiality.

\subsection{Evaluation scenario}
For our evaluation we consider a scenario inspired by the task of securely storing electronic health records in a medium sized doctor's office.
The storage time frame is \numprint{100} years.
Every month, 10 new data items of size \SI{10}{\kilo\byte} (e.g., prescription data of patients) are added.
Every year, one document from each of the previous years is retrieved and verified (e.g., historic prescription data is read from the archives).

We assume the following renewal schedule for protecting the evidence against the weakening of cryptographic primitives.
The signatures are renewed every 2 years, as this is a typical lifetime of a public key certificate, which is needed to verify the signatures.
While signature scheme instances can only be secure as long as the corresponding private signing key is not leaked to an adversary, commitment scheme instances do not involve the usage of any secret parameters.
Therefore, their security is not threatened by key leakage and we assume that they only need to be renewed every 10 years in order to adjust the cryptographic parameter sizes or to choose a new and more secure scheme.

In our architecture we instantiate signature and commitment schemes as follows.
As signature scheme, we first use the RSA Signature Scheme~\cite{Rivest:1978:MOD:359340.359342} and then switch to the post-quantum secure XMSS signature scheme \cite{Buchmann2011} by 2030, as we anticipate the development of large-scale quantum computers.
Both of these schemes satisfy unforgeability security as required by Theorem~\ref{thm.elsa.integrity}.
As the vector commitment scheme we use Construction~\ref{const.ebshvcom} with the statistically hiding commitment scheme by Halevi and Micali~\cite{Halevi1996} whose security is based on the security of the used hash function which we instantiate with members of the SHA-2 hash function family~\cite{sha2}.
If we model hash functions as random oracles, the extractable-binding property required by Theorem~\ref{thm.elsa.integrity} is provided.
This vector commitment scheme construction also provides statistical hiding security as required by Theorem~\ref{thm.elsa.confidentiality}.
%
%
We adjust the signature and commitment scheme parameters over time as proposed by Lenstra and Verheul~\cite{DBLP:journals/joc/LenstraV01,lenstra2004key}. The resulting parameter sets are shown in Table~\ref{tab.params}.

\begin{table}
\center

\caption{\label{tab.params} Overview of the used commitment and signature scheme instances and their usage period.}

\begin{tabular}{ccc}
\toprule
\textbf{Years} & \textbf{Signatures} & \textbf{Commitments} \\
\midrule

2018-2030 & RSA-2048 & HM-256 \\

%

2031-2090 & XMSS-256 & HM-256 \\

2091-2118 & XMSS-512 & HM-512 \\

\bottomrule
\end{tabular}

\end{table}

For the storage system, we use the secret sharing scheme by Shamir \cite{Shamir:1979:SS:359168.359176}.
We run this scheme with $4$ shareholders and a threshold of $3$ shareholders are required for reconstruction.
Secret shares are renewed every 5 years, where the resharing is carried out centrally by the data owner.

\subsection{Results}
\todo{currently we do not account for the initial signature (with respect to signature generation time, storage space, and verification time)}%
We now present the results of our performance analysis.
Figure~\ref{fig.eval.overalltime} compares the computation time and storage costs of the two systems, {\ELSA} and {\LINCOS}.
Our implementation was done using the programming language Java.
The experiments were performed on a computer with a quad-core AMD Opteron CPU running at \SI{2.3}{\giga\hertz} and the Java virtual machine was assigned \SI{32}{\giga\byte} of RAM.

We observe that $\ELSA$ is much more computationally efficient compared to $\LINCOS$.
Completing the experiment using $\LINCOS$ took approximately \SI{6.81}{\hour}, while it took only \SI{24}{\minute} using $\ELSA$.
The biggest difference in the timings is observed when renewing timestamps.
Timestamp renewal with \LINCOS{} for year 2116 takes \SI{21.89}{\minute}, while it takes only \SI{0.34}{\second} with $\ELSA$.
Data storage performance is also considerably faster with $\ELSA$ than with $\LINCOS$.
The same holds for the commitment renewal procedure.
Data retrieval and verification performance is similar for the two systems.

Next, we observe that $\ELSA$ is also more efficient compared to $\LINCOS$ when it comes to the consumed storage space at the evidence service.
This is, again, because \ELSA{} requires fewer timestamps to be generated and stored than \LINCOS{}.
After running for 100 years, the evidence service of $\ELSA$ consumes only \SI{17.27}{\mega\byte} while the evidence service of $\LINCOS$ consumes \SI{1.75}{\giga\byte} of storage space.
We observe by Figure~\ref{fig.eval.space.sh} that $\ELSA$ consumes slightly more storage space at the shareholders than $\LINCOS$. This is because additional decommitment information for the vector commitments must be stored.
After running for 100 years, a shareholder of $\ELSA$ consumes about \SI{748}{\mega\byte} while a shareholder of $\LINCOS$ consumes about \SI{559}{\mega\byte} of storage space.

\begin{figure}
\centering

\begin{tikzpicture}
\begin{axis}[
    title={Running time of the experiment},
    xlabel={Year},
    ylabel={\si{\minute}},
    ymajorgrids=true,
    grid style=dashed,
    width=0.8\columnwidth,
    height=4.5cm,
    x tick label style={/pgf/number format/.cd, set thousands separator={}},
    legend pos=north west,
    y filter/.code={\pgfmathparse{#1/1000/60}\pgfmathresult},
]

\addplot+[no marks]
table[skip first n=16, x index=0, y index=6]
{data/results_LINCOS.txt};
\addlegendentry{LINCOS}

\addplot+[no marks]
table[skip first n=16, x index=0, y index=6]
{data/results_ELSA.txt};
\addlegendentry{ELSA}

\end{axis}
\end{tikzpicture}

\vskip1em


%
%
%
%
%
%


\begin{tikzpicture}
\begin{axis}[
    title={Storage space (evidence service)},
    xlabel={Year},
    ylabel={\si{\mega\byte}},
    ymajorgrids=true,
    grid style=dashed,
    width=0.49\columnwidth,
    height=4.5cm,
    x tick label style={/pgf/number format/.cd, set thousands separator={}},
    legend pos=north west,
    y filter/.code={\pgfmathparse{#1/1024/1024}\pgfmathresult},
]

\addplot+[no marks, jump mark left]
table[skip first n=16, x index=0, y index=4]
{data/results_LINCOS.txt};
\addlegendentry{LINCOS}

\addplot+[no marks, jump mark left]
table[skip first n=16, x index=0, y index=4]
{data/results_ELSA.txt};
\addlegendentry{ELSA}

\end{axis}
\end{tikzpicture}
%
%
%
\begin{tikzpicture}
\begin{axis}[
    title={Storage space (per shareholder)},
    xlabel={Year},
    ylabel={\si{\mega\byte}},
    ymajorgrids=true,
    grid style=dashed,
    width=0.49\columnwidth,
    height=4.5cm,
    x tick label style={/pgf/number format/.cd, set thousands separator={}},
    legend pos=north west,
    y filter/.code={\pgfmathparse{#1/1024/1024}\pgfmathresult},
]

\addplot+[no marks, jump mark left]
table[skip first n=16, x index=0, y index=5]
{data/results_LINCOS.txt};
\addlegendentry{LINCOS}

\addplot+[no marks, jump mark left]
table[skip first n=16, x index=0, y index=5]
{data/results_ELSA.txt};
\addlegendentry{ELSA}

\end{axis}
\end{tikzpicture}


\caption{Running time of the experiment and storage space consumption of the evidence service and per shareholder.}
\label{fig.eval.overalltime}\label{fig.eval.space.es}\label{fig.eval.space.sh}
\end{figure}

\bibliographystyle{bib/splncs04}
\balance
\bibliography{bib/bibliography}


\end{document}